\begin{document}

\title{Mining Hidden Populations through Attributed Search}

\author{Suhansanu Kumar, Heting Gao, Changyu Wang, Hari Sundaram, Kevin Chen-Chuan Chang} 
    \affiliation{ 
    \department{Department of Computer Science}.
    \institution{University of Illinois, Urbana Chamapaign}
    }
    \email{ {skumar56, hgao17, changyuw, hs1, kcchang}@illinois.edu }

\settopmatter{printacmref=false} 
\renewcommand\footnotetextcopyrightpermission[1]{} 

\begin{abstract}

Researchers often query online social platforms through their application programming interfaces (API) to find target populations such as people with mental illness~\cite{De-Choudhury2017} and jazz musicians~\cite{heckathorn2001finding}. Entities of such target population satisfy a property that is typically identified using an oracle (human or a pre-trained classifier). When the property of the target entities is not directly queryable via the API, we refer to the property as `hidden' and the population as hidden population. 
Finding individuals who belong to these populations on social networks is hard because they are non-queryable, and the sampler has to explore from a combinatorial query space within a finite budget limit. 
By exploiting the correlation between queryable attributes and the population of interest and by hierarchically ordering the query space, we propose a Decision tree-based Thompson sampler (\texttt{DT-TMP}) that efficiently discovers the right combination of attributes to query. 
Our proposed sampler outperforms the state-of-the-art samplers in online experiments, for example by 54\% in Twitter. When the number of matching entities to a query is known in offline experiments, \texttt{DT-TMP} performs exceedingly well by a factor of 0.9-1.5$\times$ over the baseline samplers. 

\end{abstract}

\begin{CCSXML}
<ccs2012>
<concept>
<concept_id>10010147.10010178.10010205.10010207</concept_id>
<concept_desc>Computing methodologies~Discrete space search</concept_desc>
<concept_significance>500</concept_significance>
</concept>
<concept>
<concept_id>10010147.10010257.10010293.10003660</concept_id>
<concept_desc>Computing methodologies~Classification and regression trees</concept_desc>
<concept_significance>300</concept_significance>
</concept>
<concept>
<concept_id>10002951.10003260.10003282.10003292</concept_id>
<concept_desc>Information systems~Social networks</concept_desc>
<concept_significance>300</concept_significance>
</concept>
<concept>
<concept_id>10002951.10003260.10003300</concept_id>
<concept_desc>Information systems~Web interfaces</concept_desc>
<concept_significance>100</concept_significance>
</concept>
</ccs2012>
\end{CCSXML}

\ccsdesc[500]{Computing methodologies~Discrete space search}
\ccsdesc[300]{Computing methodologies~Classification and regression trees}
\ccsdesc[300]{Information systems~Social networks}
\ccsdesc[100]{Information systems~Web interfaces}

\maketitle
\section{Introduction}
\label{sec:introduction}

Social networks have made available large amounts of public information online. This colossal information has led to the growth of industries such as social media marketing and management, online advertisements. Futher, it has led to scientific advancements in analysis and mining of social information by several organizations including ICWSM and ASONAM. 

However, with the enormous size of crowds on social networks such as Twitter and Facebook, we are facing difficulty in mining information concerning a target group of the population. For example, social scientists and advertisers are increasingly interested in understanding the online behavior of a specific population. Examples include: people with mental illnesses~\cite{De-Choudhury2017}, sex workers~\cite{malekinejad2008using}, cyber bullying~\cite{raisi2017cyberbullying}, hacked accounts~\cite{karimi2018toward} to name a few. More generally, the researchers' goal is to find people that satisfy a certain property, but \textit{crucially}, API doesn't provide direct querying of the hidden property. For example, one cannot use the phrase ``mental illness'' on Twitter to identify people on Twitter potentially suffering from depression because they rarely use that phrase in any of their tweets to self-describe. Thus when we refer to ``hidden populations,'' we are more concretely referring to populations with a non-queryable property.

In this paper, \textit{we focus on identifying hidden populations through attributed search}. Many social networks allow for searching via attributed query, in addition to textual query. For example, we can also specify time and location attributes on Twitter in addition to text. In contrast to attributed search, text search has received considerable attention in the IR community. While there exist significant work on web crawling~\cite{olston2010web, chakrabarti1999focused,alvarez2007crawling}, we are missing such technologies to mine the entities e.g., users on Twitter or GitHub) using their social attributes  (e.g., for Twitter: their tweets, location; for GitHub: programming language). To address this gap, we explore the problem of hidden population sampling from online social platforms using attributed search for the first time. 

There are two prominent reasons why attributed search of hidden populations on social networks is challenging:

\textbf{Combinatorial search space:} A combination of queryable attributes expresses a query issued to the application programming interface (API). Thus, the size of query space is the product of the attribute cardinalities (the number of possible values for each attribute) and grows exponentially as the number of attributes increases. Since social networks employ rate limits affecting the number of queries (e.g., Twitter API allows only 60 API calls in an hour) and search result limits (e.g., Twitter API returns at most 100 entities in single API call) affecting the number of results obtained in a single query, a naive hidden population sampler is likely to remain in exploration phase after it exhausts the query budget.

\textbf{Black-box API:} The internal mechanism of online APIs is often propriety information unavailable to users. The sampler interacts with API using query to get a subset of entities (returned result) matching the query. Some of the challenges associated with an unknown query system are: a) \textit{stochastic feedback} i.e., different pages of a query very likely have different number of hidden entities, b) \textit{re-sampling of a hidden entity}, i.e. the API returns the same hidden entity for two different queries when the entity satisfies both queries, c) \textit{variable size of returned results}, i.e. the API fewer than the page-size of entities when there is not enough matching entities in the population. 
Thus, it becomes difficult for online samplers to efficiently discover high-quality queries that would lead to the sampling of a high number of hidden entities. 

The key insight for hidden population sampling is to identify high-quality queries and issue them multiple times. First, we address the problem of combinatorial search space by hierarchically organizing the query space in the form of a tree. Subsequently, we use a \textit{decision-tree} based search strategy to systematically explore the query space by expanding along high yielding decision-tree branches. Second, we address the problem of black-box API by using the returned set of results to estimate the quality of not just the issued query but also related queries sharing one or more attribute combinations. We employ a reward function to estimate the unique un-sampled hidden entities that can be obtained by issuing a query. Our unified reward function takes into account the stochastic feedback and re-sampling effect while allowing for a exploration-exploitation among queries. We use reinforcement learning based \textit{Thompson sampling} to define the reward function.

We put together the above insights to propose a Decision-Tree Thompson sampling (\texttt{DT-TMP}) algorithm. We illustrate the working of \texttt{DT-TMP} using a 3D toy model in \Cref{fig:toy_model} comprised of three queryable attributes represented by three dimensions. The \texttt{DT-TMP} algorithm maintains a query pool from where it queries the API. We initialize the query pool with the most general query, i.e. the algorithm searches from the entire population. Then, after an epoch time, the algorithm identifies the most promising query (i.e. attribute and the corresponding value)---the one with the highest reward. The query pool expands to add new queries as shown in the second subplot. That is, \texttt{DT-TMP} issues a query with this attribute value as a predicate to identify the next attribute and corresponding attribute value to use as a conjunct. Thus, the algorithm over iterations converges to explore among the high hidden entity density regions of the population. Based on the returned set feedback, \texttt{DT-TMP} updates reward function corresponding to every query in the query pool. Finally, the algorithm terminates when we've exhausted the query budget.

In summary, this paper makes the following contributions:
\begin{itemize}
    \item We propose a novel hidden population sampling problem in online social platforms via attributed search. Further, we propose a reinforcement learning based sampling strategy that performs hierarchical exploration of the combinatorial query space. 
    \item Our generalized sampling strategy applies to diverse web-forms having a variable number and type of attributes with varying attribute cardinalities. 
    \item We perform a comprehensive set of experiments over a suite of twelve sampling tasks on three online web-query platforms: Twitter, RateMDs and GitHub, and three offline entity datasets: Patent, Adult and Auto that illustrates the efficacy of \texttt{DT-TMP} sampler. \texttt{DT-TMP} outperforms all baseline samplers, for example by a margin of 54\% on Twitter. 
    \item We perform an extensive ablation study to understand the impact of different sampling factors. We find page-size, attribute cardinality, the number of queryable attributes and correlation between queryable attributes and the hidden property are the prominent factors that affect sampling. 
\end{itemize}

\begin{figure}
\centering
\includegraphics[width=\linewidth,trim={2cm 11cm 4cm 6cm},clip]{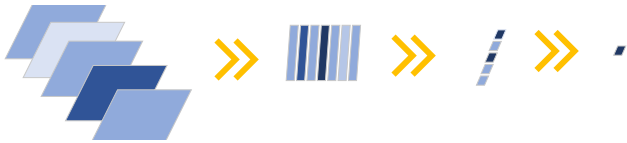}
\caption{\small The entity population is represented by a cube (color indicates the distribution of hidden entities). \texttt{DT-TMP} hierarchically explores the entity population using a drill down approach. It first finds the best query plane (general query), followed by the best line in a query plane (more specific) query and so on.} 
\label{fig:toy_model}
\end{figure}
\section{Problem statement}
\label{sec:problem statement}

In this section, we motivate the problem of hidden population sampling through a real-world example. A formal definition of the sampling problem follows the motivating example. Finally, we present the challenges associated with the problem, and the desired characteristics of an ideal sampler.


Consider a scenario in which a healthcare expert or a researcher is interested in reaching out to the depressed people on Twitter. Assume that the expert has designed a classifier for identifying whether a Twitter user is depressed or not based on the user's profile description and activity~\cite{de2013predicting, tsugawa2015recognizing}. The expert's objective is then to retrieve a maximum number of Twitter users that have the \textit{hidden property} of depression. However, it is not trivial under present circumstances to sample the depressed population from Twitter due to several bottlenecks. The database of Twitter accounts is accessible only through Twitter's application programming interface (API). As shown in~\Cref{fig:twitter_interface}, Twitter allows its user accounts to be queried by only some specific \textit{attributes} such as time, location and text. Besides, Twitter permits only 180 API calls in a 15-minute window. Notice that the depressed population is distributed across the entire Twitter population necessitating the sampler to consider all types of queries. Furthermore, the distribution of the depressed population across different queries is unknown making it difficult to frame queries that would yield in a high discovery of the depressed population within a limited \textit{budget} of API calls.

\begin{figure}
 \centering
 \includegraphics[width=3.2in, height=3.2in]{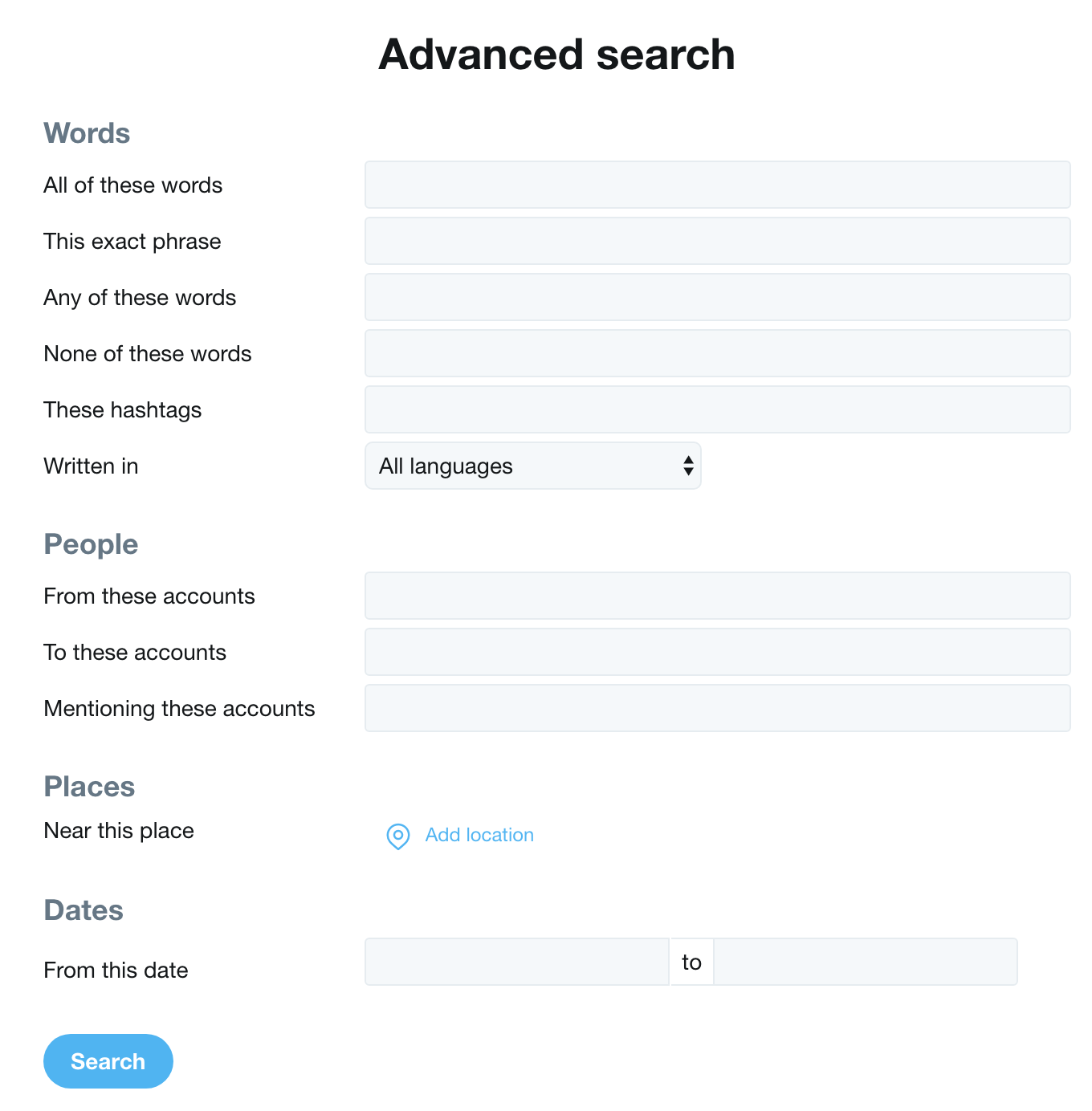}
 \caption{Twitter query interface shown as a typical example of online social platform interface comprising several queryable attributes.}
 \label{fig:twitter_interface}
\end{figure}

To explain the sampling framework, we describe two major features of online social platform services like Twitter API: \textit{query interface} and \textit{returned-result set}. Query interface as shown in~\Cref{fig:twitter_interface} lets the expert query Twitter by setting attribute-values to the queryable attributes. For example, the expert may set the location attribute to `New York' and text-attribute to `mental health' and time attribute to `*' (or ignoring it). In other words, the \textit{query} can be interpreted as a conjunction over queryable attributes say  $A_i$ ($i = 1,2,\dots r$) where attribute-values $z_i$ of the attributes are obtained from their respective attribute domains $z_i \in d_i$ where $d_i = dom(A_i) \cup \{*\}$. Formally, we shall represent a query involving $r$ queryable attributes by $\bigwedge_{i=1}^{r} z_i$. In this work, we consider only conjunctive combination of attributes as a query. 

On issuance of a query, Twitter API by default returns a result page comprising $m$ (=20) entities and a pointer to the next page of results. The sampler may obtain the subsequent $m$ results for the same query by issuing another API call or get another $m$ results by issuing a different query. Thus, the sampler incurs a unit cost of communication for each query issued. Subsequently, the profiles of entities returned by the API are analyzed to identify whether they satisfy the hidden property of depression or not. Since the cost incurred in determining the entity's hidden property is directly proportional to the API call cost, we use API cost as the sole cost constraint of the problem. 

Now, we formally define the hidden sampling problem as follows. 

\textbf{Problem definition:} \emph{Suppose entities on an online social platform are queryable using a conjunctive combination of $r$ queryable attributes $A_i$ for $i=1, 2, \dots r$. Further, consider that there exists a target subpopulation satisfying a hidden property that is verifiable by an oracle (usually a classifier). Given a budget $B$ of API calls, the sampler's objective is to maximize the count of sampled entities satisfying the hidden (target) property.}

\textbf{Problem hardness:} The sampling of hidden population is challenging in the real-world setting due to several reasons. One, the number of possible queries that can be constructed using the queryable attributes ($\prod_{i} |d_i|$) is exponential in the size of attribute domains. Given a fixed API budget, query selection from an exponential query space makes the problem particularly challenging. Two, the API returns $m$ or fewer results, and when similar queries are used, it often leads to re-sampling of same entities. It therefore becomes pertinent to handle the re-sampling issue so that maximum number of distinct hidden entities can be sampled. Three, the limited API calls necessitates that the sampler manages an exploration-exploitation tradeoff. Given that the sampler is oblivious of the correlation between the queryable attributes and the hidden property, it has to tradeoff the API calls between learning the correlation and issuing the highly correlated queries. In addition, a hidden population sampler is intended to be used for sampling a diverse set of entities such as books, restaurants, products or people through the web interface of online platforms such as the library, Yelp, Amazon and LinkedIn respectively. Given the variety of online platform settings comprising of different sorts of queryable attributes and an unspecified hidden property, an ideal sampler should exhibit the following characteristics. 

\emph{Simplicity}: The model should be applicable to web-forms having different number and types of attributes with varying attribute cardinality. 
\emph{Online}: The model can be updated using only the feedback obtained by the returned result analysis. 
\emph{Unsupervised}: The lack of training examples is usually the prime motivation behind hidden population sampling. Prior distributional information about the hidden population is unavailable to the sampler; thus necessitating the algorithm to be unsupervised. 
\emph{Task-independence}: The sampler is oblivious of the hidden property, and it could therefore be used for sampling any well-defined hidden population, i.e., the sampler should be able to adapt when plugged-in with a different black-box classifier used as the oracle. 
\emph{Perpetual}: Ideally, a sampler is expected to be efficient both when the budget is limited and when the budget is asymptotically large.
\emph{Flexibility}: The model could be easily extended when extra information such as attribute semantics or attribute correlations is partially available.

In the following section, we propose a hidden population sampler that exhibits every aforementioned characteristic. \Cref{tab:terminology} lists some of the frequently used terms and their definitions in this paper.

\begin{table}
    \caption{Terms and their definitions.}
    \label{tab:terminology}
    \begin{tabular}{p{1.5cm}rp{4.5cm}}
        \toprule
        \textbf{Term} & \textbf{Notation} & \multicolumn{1}{c}{\textbf{Definition}}\\ \midrule
        Page size & $m$  & Maximum number of results returned in a single API call. \\
        Query precision & $p_q$  & Fraction of target entities in the population matching a given query.\\
        Attribute cardinality & $|dom(A_i)|$ & Number of attribute-values of a given attribute.\\ 
        Attribute domain & $dom(A_i)$ & Set of all possible attribute-values of a given attribute. \\ 
        \bottomrule             
\end{tabular}
\end{table}

\section{Decision tree-Multi Armed Bandit}
\label{sec:proposed crawler}

In this section, we discuss in detail our proposed \textit{unsupervised online} method for sampling the hidden target population. 

\subsection{Decision problem}
\label{subsec:decision}

We show that the process of sampling hidden population from online social platforms as described in \Cref{sec:problem statement} is primarily a decision problem. The sampler continuously decides which query to issue to the API such that the sampler obtains a maximum possible number of entities from the hidden population within the given API budget. Based on the sampled entities, the sampler maintains a probabilistic model of the entity database that gets updated over time. The model is used to construct a query. The returned-results obtained from issuing the constructed query is subsequently used to update the model. This cycle of query construction, returned-result analysis, and model updation continues until the API budget runs out. We deliberate upon each component of the cyclic process separately. 

We maintain the model of the entity database using a set of probabilistic parameters. In the absence of any prior semantics or syntactic information about the attributes, the model treats each queryable attribute such as location, time and keywords in Twitter as independent variables. Furthermore, we model the attribute-values of every attribute independently, i.e. `New York', `Los Angeles` and `Chicago' corresponding to location attribute is modeled independently as well. The above assumptions concerning the entity database allow our model to be applicable across a suite of online social platforms. The probability model of the entity database is used to not only estimate the utility of issuing each possible query but also to construct the next query.

The sampler interacts with the online social platform via the query interface. As stated in \Cref{sec:problem statement}, we represent the query as a conjunctive combination of discrete attributes. In compliance with our problem formulation, we approximate continuous attributes by discretizing them into different bins and handle text search by an expert-based selection of a few relevant textual phrases. Note, that the number of possible queries that can be constructed using the queryable attribute is still exponential, i.e. $\prod_{i} |d_i|$ which is typically very high. The exponential number of choices makes the decision problem even more challenging.

On issuance of an attributed query, the online social platform returns a list of entities: `returned result set'. As shown in the Twitter example, the same attributed query can be used to gather more results by traversing over the next pages. The returned results act as feedback for the sampler which is used to update the model. The number of entities belonging to the hidden population indicates the quality of the query. Thus, the core objective of the hidden population sampler is to find high-quality queries and to issue those queries repeatedly.

Next, we describe a detailed solution to the cyclic process of decision making for hidden population sampling by employing a decision-tree guided multi-armed bandit algorithm. In the later sections, we show the utility of our proposed sampler over a range of tasks.

\begin{figure*}
 \centering
 \includegraphics[width=\textwidth,trim={0cm 17.5cm 0cm 0cm},clip]{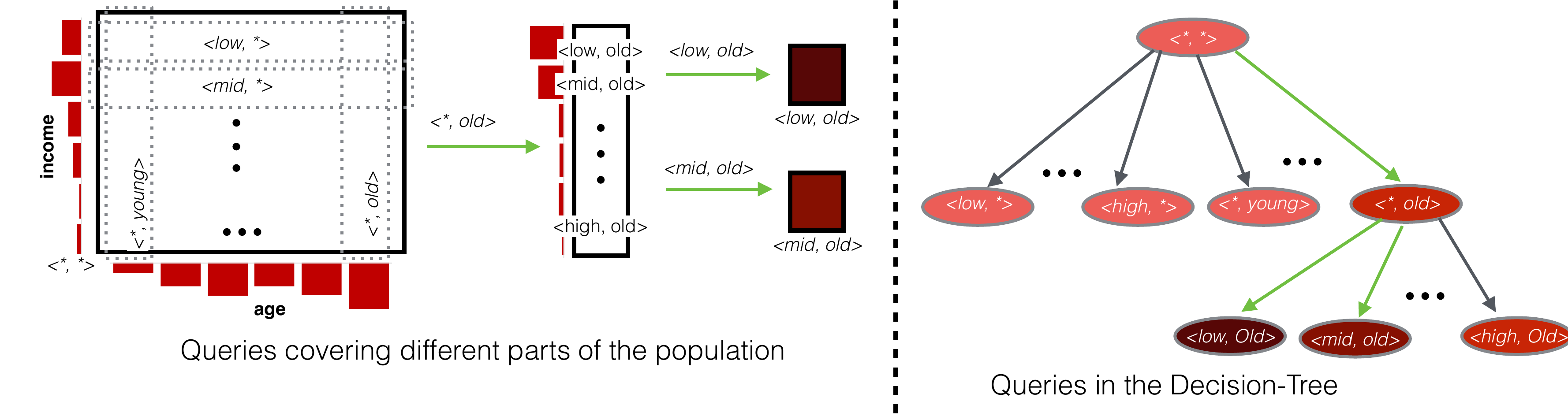}
 \caption{\small Model description of \texttt{DT-TMP}. For hidden population of `mental illness' (represented in red color), the \texttt{DT-TMP} searches the population for the best combinatorial query comprising of two queryable attributes: income and age. It first uses \texttt{<*, *>} query to find the best single attributed query from queries such as \texttt{<Low, *>} and \texttt{<*, Young>}. Subsequently, it finds the best query \texttt{<Low, *>} along which it expands its query search. The decision tree on the right shows the query expansion withe the query expansion along green links. }
 \label{fig:decision_tree_thompson}
\end{figure*}

\subsection{Proposed \texttt{DT-TMP} algorithm}
\label{subsec:proposed algorithm}
In this section, we present solutions to the three prominent challenges encountered by a hidden population sampler in the real-world setting. The challenges are exponential query space, unconventional reward feedback, and an unknown correlation between queryable attributes and hidden property. Next, we fully describe the proposed Decision-Tree Thompson sampler (\texttt{DT-TMP}).

First, as noted in the problem statement, any hidden population sampler constructs its queries by choosing attribute-values $z_i$ from $d_i$ of each queryable attribute $A_i$. The size of query space is therefore exponential in the attribute cardinality $\prod_{i} |d_i|$. We deal with the problem of exponential query space by hierarchically ordering the queries from the most general to the least general (or the most specific) query. \Cref{fig:decision_tree_thompson} shows the hierarchical organization of queries. For instance, a query where location attribute is set to `Chicago' and text attribute is ignored (or set to `*') is a generalization of the query where location is set to `Chicago' and text attribute is set to `\#Cubs' since the former includes all entities matching the later. In principle, a query $q_1$ is a generalization of query $q_2$ if the set of population entities matching query $q_1$ is a superset of the set of entities matching $q_2$. Hence, the most general query is one where $z_i$ set to `*' for every attribute.

Second, the analysis of the returned result set by the API is a non-trivial task because of the partial information available during sampling and the re-sampling issue. In each API call, the sampler obtains partial information in the form of the returned set of $m$ or fewer results. For illustration, consider that a query where location attribute is set to `Chicago' yields 5 entities from the depressed population out of 20 returned entities on the first result page. We shall assume the query precision or the fraction of hidden entities matching this query is $5/20 = 0.4$. In other words, it is very likely to obtain another 5 hidden entities when a new API call is made for the next result page of the same query. More generally, we model probabilistically the query precision using a Beta distribution which is typically used to model probability of probabilities~\cite{robinson2017introduction}. Furthermore, a query where location is set to `Chicago' is very likely to lead to the entities that also satisfy queries where the text attribute is `\#Cubs'. The sampler, therefore, needs to update the quality metric of queries where text attribute is set to `\#Cubs' so that it avoids making redundant queries that lead to re-sampling of same hidden entities. We avoid the re-sampling by estimating the expected number of distinct unseen entities to be discovered by issuing a query $q$. Without making any assumption about the ordering of results for a specific query, we assume that the results are returned either via sampling with or without replacement from the set of entities matching the given query. For sampling with replacement, we derive a reward function as follows.

Assume that the number of entities in the database satisfying a query $q$ is $N_q$. Further, assume that the sampler has already observed $n_q$ distinct entities satisfying the query $q$ out of which there are $S_q$ number of target entities and $F_q$  number of non-target entities. From Standard Probability Theory, we therefore obtain the following expected reward $r_q$ when query $q$ is executed.
\begin{equation}
 \mathbb{E}[r_{q}] = \underbrace{\frac{S_q}{S_q + F_q}}_\text{expected \# targets} \cdot \underbrace{\frac{N_q-n_q}{N_q}}_\text{new} \cdot \underbrace{\Big( 1- \big(1 - \frac{1}{N_q} \big)^{m}}_\text{unique} \Big)
 \label{eq:expected_reward}
\end{equation}

where $m$ is the maximum number of results returned by the web API in response to a single query. The expected reward is the estimated number of new distinct hidden entities that are likely to be obtained by re-issuing the query $q$. 

For sampling with replacement, we update the reward function by dropping the third term since unique entities within the result page of a query is guaranteed (i.e. $\frac{S_q}{S_q + F_q}\frac{N_q-n_q}{N_q}m$). When $N_q$ is unknown, we assume that $N_q >> n_q$, therefore the reward function approximates to just the first term (i.e. $\frac{S_q}{S_q + F_q} m$).

The query precision for any query is an unknown measure to an unsupervised hidden population sampler. If the precision values are known, the sampler would straightforwardly formulate its queries using only the high precision queries. In order to obtain an unbiased estimate of the precision, the sampler needs to explore over the combinatorially large query space. While a naive exploration of queries is beneficial for formulating better future queries learned from the unbiased estimates, it leads to poor immediate results. We, therefore, employ Thompson sampling for handling this exploration-exploitation tradeoff of queries. Thompson sampling is a well-known optimal MAB algorithm that achieves the lower regret bound of the MAB problem~\cite{agrawal2013further}. Notice that Thompson sampling is consistent with the independence assumption of attributes and attribute-values made in \Cref{subsec:decision}.

Now, we generalize the intuitions presented above to propose a simple yet effective hidden population sampler: Decision-Tree Thompson sampler (\texttt{DT-TMP}). 

\underline{Description of algorithm}: \texttt{DT-TMP} is a unique combination of a standard Decision Tree~\cite{quinlan1986induction} and Thompson sampling~\cite{thompson1933likelihood}. \texttt{DT-TMP} maintains a query pool $\mathcal{Q}$ comprising of queries explored by the algorithm. The query pool is initialized with the most general query. Typically, the most general query can be represented using the queryable attributes as $\bigwedge_{i=1}^{r} *$. The most general query is initially used to sample from the entire population. \texttt{DT-TMP} expands the query pool by adding more specific queries. 

For every query $q \in \mathcal{Q}$, \texttt{DT-TMP} tries to predict the future reward that would be obtained when query $q$ is issued. Based on the prediction, \texttt{DT-TMP} chooses the best query to issue. A query issued to the API yields a result page comprising $m$ entities. Each returned entity is evaluated as a success or failure depending on whether it belongs to the hidden population or not. We model each success and failure of every returned entity as a random sample drawn from a unknown Beta distribution of the query that the model learns over iterations. We use a non-informative uniform prior $Beta(1, 1)$ as the starting state for every query. This choice of Beta distribution permits us to efficiently update the posterior distribution upon receiving the returned results. 

We now show how to update the posterior distribution of any query $q' \in \mathcal{Q}$ when another query $q$ is issued. If $q$ is a generalization of $q'$, we increment the success or failure parameter of Beta distribution by one depending on whether the returned entity is in target populace or not, and the returned entity matches query $q$. In the other case, when the specific query $q$ accounts for only a fractional part of the general query $q'$, we update the Beta distribution of the general query proportionately. That is, if $q$ is a specific version of $q'$, we increment the success or failure parameter of $q$ by the ratio of population size matching query $q$ to population size matching query $q'$. We are able to estimate this fraction directly from the returned result since the query pool is expanded hierarchically from the most general to the most specific queries. 

At each step of the iteration, \texttt{DT-TMP} employs Thompson sampling to select the best query among the query pool. Note, that the query pool is fixed over epoch time $h$ to ensure that enough entities are sampled before expanding the query pool. The query pool is expanded by adding new specific queries corresponding to the best query in $\mathcal{Q}$. We prove using Lemma \ref{lemma:centering} that expansion of a general query always leads to an equally good or a higher precision specific query. Thus, \texttt{DT-TMP} continues to find the highest quality query until the budget is finished.  

\begin{lemma}
The query precision of the specific queries are centered around the query precision of their corresponding general query. Further, there exists a leaf node of the decision tree with the highest quality precision.

\label{lemma:centering}
\end{lemma}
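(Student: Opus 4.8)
The plan is to exploit the fact that specializing a general query along a single attribute induces a partition of the entities it matches, which turns the general precision into a convex combination of the specific precisions. Fix a general query $q'$ in the pool in which some attribute $A_j$ is set to the wildcard `*', and let $q_1,\dots,q_k$ be the specific queries obtained by replacing that wildcard with each value $v\in dom(A_j)$ in turn, leaving all other predicates of $q'$ unchanged. Writing $N_q$ for the number of population entities matching a query $q$ and $T_q$ for the number of those that are targets, so that $p_q = T_q/N_q$, I would first argue that every entity matching $q'$ satisfies exactly one of $q_1,\dots,q_k$: each entity carries a single value of $A_j$, so the $q_i$ are mutually exclusive and jointly cover the matches of $q'$. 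Hence $N_{q'}=\sum_i N_{q_i}$ and $T_{q'}=\sum_i T_{q_i}$.

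From this partition the first claim follows directly. Substituting the sums gives
\begin{equation}
p_{q'} \;=\; \frac{T_{q'}}{N_{q'}} \;=\; \frac{\sum_i T_{q_i}}{\sum_i N_{q_i}} \;=\; \sum_i \frac{N_{q_i}}{N_{q'}}\, p_{q_i},
\label{eq:convex}
\end{equation}
so $p_{q'}$ is a weighted average of the specific precisions $p_{q_i}$ with weights $w_i = N_{q_i}/N_{q'}\ge 0$ satisfying $\sum_i w_i = 1$. This is precisely the sense in which the specific-query precisions are centered on the general-query precision: $p_{q'}$ is their $N$-weighted mean, and consequently $\min_i p_{q_i} \le p_{q'} \le \max_i p_{q_i}$.

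For the second claim I would use the upper bound $p_{q'}\le \max_i p_{q_i}$, since a convex combination never exceeds its largest term. Thus expanding any node produces at least one child whose precision is no smaller than the parent's, which is the monotonicity statement invoked by the algorithm. To place the global optimum at a leaf, start from any node attaining the maximum precision $p^\star$ over the whole tree; if it is internal, the inequality yields a child with precision $\ge p^\star$, hence exactly $p^\star$ by maximality, and I descend to that child. Because the tree has finite depth, bounded by the number of queryable attributes $r$, after which every attribute is fixed and the node is a leaf, this descent terminates at a leaf whose precision equals $p^\star$. Therefore a leaf attains the highest precision in the tree.

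The main obstacle is the clean partition step: it relies on each entity possessing exactly one value per attribute, so that specializing along $A_j$ neither drops nor double-counts matches of $q'$. This is immediate for the discretized single-valued attributes assumed in \Cref{subsec:decision}, but would need care for genuinely multi-valued or overlapping text predicates, where the $q_i$ could overlap and \eqref{eq:convex} would weaken to an inequality rather than an identity. The remaining steps, namely reading `centered around' as the weighted mean in \eqref{eq:convex} and the finite top-down descent, are then routine given the partition and the finiteness of the attribute set.
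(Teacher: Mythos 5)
Your proposal follows essentially the same route as the paper's own proof: both express the general query's precision as a convex combination of the specific queries' precisions over the disjoint partition induced by fixing one wildcard attribute, bound it by the maximum child precision, and descend the finite-depth tree to place the optimum at a leaf. Your version is somewhat more careful than the paper's --- you explicitly derive the weighted average from $N_{q'}=\sum_i N_{q_i}$ and $T_{q'}=\sum_i T_{q_i}$ rather than asserting it, make the descent-to-leaf argument an explicit finite induction, and correctly flag the single-valued-attribute assumption on which the partition step rests --- but the underlying argument is the same.
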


\begin{proof}
Lets assume that the query precision of a general query $q_g$ is $p_g$, and it has $n$ immediate specific queries $q_i$ as children in the decision tree. Assume that the query precision of the $i$-th specific query $q_i$ is $p_i$ where $i \in \{ 1, 2, \dots n \}$. Further, we denote $f_i$ as the ratio of the size of the population matching query $q_i$ to the size of population matching the general query $q_g$. Since the disjoint specific queries cover the general query, $\sum_{i} f_i = 1$.

By preservation of hidden entities, the query precision of the general query $p_g$ can be expressed as the weighted average of the specific queries's precision. That is, 
\[ p_g = p_1 f_1 + p_2 f_2 + \dots p_n f_n \]

Since, $p_g$ is a weighted average, it is bounded by the maximum and minimum of the specific queries' precision.

 Following the above argument, we note that there always exists a specific query whose precision is strictly greater or equal to the query precision of the general query. Since, the leaf nodes are the most specific queries in the decision tree of \texttt{DT-TMP}, it therefore follows that one of the leaf nodes of the decision tree has the highest precision. 
\end{proof}

\Cref{algo:decisiontree_thompson} summarizes the \texttt{DT-TMP} algorithm via pseudo-code. We provide a diagrammatic representation of \texttt{DT-TMP} algorithm in ~\Cref{fig:decision_tree_thompson} in a stylized sampling environment where there are only two binary queryable attributes. Next, we perform a detailed analysis of the \texttt{DT-TMP} algorithm.

\begin{figure}
 \centering
 \includegraphics[width=\linewidth]{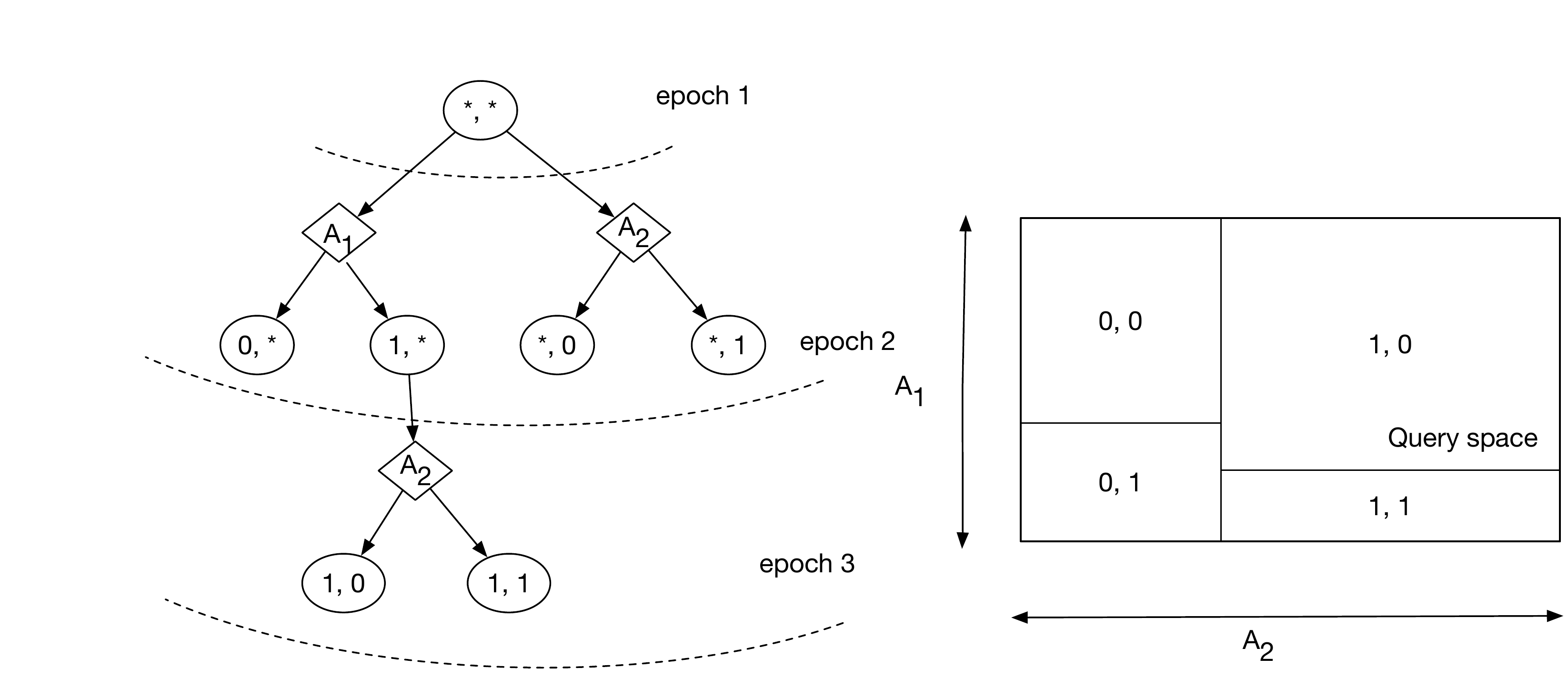}
 \caption{Decision tree diagram shows how new attribute-values pairs are added for exploration as \texttt{DT-TMP} samples queries from two binary attributes $A_1$ and $A_2$. The query at root of the tree covers the entire database represented by box in right subplot. General queries are situtated at higher levels of the tree while specific queries that cover only smaller subsets of population are situated at lower levels of the tree.}
 \label{fig:decision_tree_thompson}
\end{figure}

\begin{algorithm}
 \caption{Decision tree- Thompson sampler}\label{algo:decisiontree_thompson}
 \begin{algorithmic}[1]
  \State $S_q = 1, F_q = 1, \forall q.$ $\vartriangleright$Successes and failures of query $q$
  \State $\mathcal{Q} = \{ \bigwedge_{i=1}^{r} * \}$   \Comment{query pool}
  \State $\mathcal{S} = \phi$  \Comment{sample set of entities}
  \For {$t = 1, 2, \dots B/h$}  \Comment{Communication rounds}
  \For{ $j= 1, 2, \dots, h$ }
  \For {$q \in \mathcal{Q}$}
  \State $r_q \sim$ Beta ($S_q, F_q$) $\times \frac{N_q-n_q}{N_q} \times N_q \Big( 1- \big(1 - \frac{1}{N_q} \big)^{m} \Big)$ 
  \State \Comment{for sampling with replacement}
  \EndFor
  \State $q^{\ast} = argmax_{q \in \mathcal{Q}} \ r_q$
  \State $R$ = Execute query $q^\ast$ \Comment{Returned result}
  \State $\mathcal{S}$ = $\mathcal{S} \cup R$
  \State $s_{q^\ast}$ = Number of successes in $R$
  \State $S_{q^\ast}, F_{q^\ast} = S_{q^\ast}+s_{q^\ast}, F_{q^\ast}+ |R| - s_{q^\ast} $
  
  \State \# update Beta parameters of other queries
  \For {$q' \in descendant(q^\ast) \cap \mathcal{Q}$} 
  \State \Comment{Descendent nodes in DT are specific queries of $q^\ast$}
  \State $S_{q'}$+= \# success in $R$ matching $q'$
  \State $F_{q'}$+= \# failures in $R$ matching $q'$
  \EndFor

  \For {$q' \in ancestor(q^\ast) \cap \mathcal{Q}$}
  \State \Comment{Ancestor nodes in DT are general queries of $q^\ast$}
  \State $\rho$ = Est. population size of $q$ / population size of $q'$
  \State $S_{q'}$+= $ \rho \times$  \# success in $R$
  \State $F_{q'}$+= $ \rho \times$ \# failures in $R$
  \EndFor
  \EndFor
  \State $\mathcal{Q} = \text{Query-expansion}(\mathcal{Q}, q^\ast, \mathcal{S})$ \Comment{adds new queries to the query pool}
  \EndFor
 \end{algorithmic}
\end{algorithm}

\begin{algorithm}
 \caption{Query-expansion($\mathcal{Q}, q^\ast, \mathcal{S}$)}\label{algo:arm_expansion}
 \begin{algorithmic}[1]
  \For{$j= 1, 2, \dots, r$}
  \If{$A_j == *$}
  \For{$ v \in dom(Q_j) \wedge v \in \mathcal{S}_j$} $\vartriangleright$ add all possible attribute-values observed in sample
  \State $\mathcal{Q} = \mathcal{Q} \cup q^\ast(A_j = v)$ $\vartriangleright$ We add new specific query for unexplored attribute-attribute value pair corresponding to the best query $q^*$
  \EndFor
  \EndIf
  \EndFor
  \Return $\mathcal{Q}$
 \end{algorithmic}
\end{algorithm}

\underline{Analysis of algorithm}: Even though \texttt{DT-TMP} models the complex relationship among queryable attributes and hidden attributes while keeping an exploration-exploitation tradeoff among different queries, it is surprisingly easy to implement and has a linear space and quadratic time complexity. For practical reasons, we assume number of attributes $r$ and result size $m$ to be constant. In each iteration in the outer-loop, the maximum number of queries added to the query pool is limited by $n$ where $n= \sum_{i=1}^{r}|dom(A_i)|$; the budget $B$ limits the number of iterations. Furthermore, the decision tree takes $\mathcal{O}(\mathcal{Q})$ space which is bounded by $\mathcal{O}(nB)$. Every query in $\mathcal{Q}$ uses a constant space parameter set to estimate the reward distribution. Thus, the overall space complexity of the \texttt{DT-TMP} is $\mathcal{O}(nB)$. A similar analysis implies the time complexity of \texttt{DT-TMP} is $\mathcal{O}(n^2B)$ since each iteration (sampling from and updating of Beta distributions is performed in constant time) involving updation of specific and general query combinations take $\mathcal{O}(n)$ time. Finally, we notice that \texttt{DT-TMP} can easily be parallelized by having the worker nodes use the preceding iteration's reward distributions while having the master node maintain the central decision tree.

Lastly, we note that at limiting budgets \texttt{DT-TMP} behaves as \texttt{TMP} when the query pool expands to the entire query space.
\begin{lemma}
At asymptotic limits of the budget, \texttt{DT-TMP} tends to a naive Thompson sampler. 
\end{lemma}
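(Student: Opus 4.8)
The plan is to show that as the budget $B \to \infty$, the query pool $\mathcal{Q}$ maintained by \texttt{DT-TMP} grows to encompass the entire query space, at which point the per-round selection rule is nothing more than Thompson sampling over all $\prod_{i}|d_i|$ queries, i.e.\ the naive sampler \texttt{TMP}. The argument rests on three ingredients: the query space is finite, the pool grows monotonically with bounded depth, and Thompson sampling starves no arm that remains in the pool.

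First I would record the structural bounds. Since every attribute has finite cardinality (continuous attributes are discretized and text is restricted to finitely many phrases), the full query space has size $\prod_{i=1}^{r}|d_i| < \infty$. The pool is initialized to the singleton root $\bigwedge_{i=1}^{r}*$ and, by inspection of \Cref{algo:arm_expansion}, is only ever grown by unioning in children of the selected best query; hence $\mathcal{Q}$ is monotone nondecreasing in $t$. Moreover each expansion replaces exactly one $*$ attribute in $q^\ast$ by an observed value, so any reachable query is obtained from the root by at most $r$ single-attribute instantiations, and the decision tree therefore has depth at most $r$.

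The crux is to argue full coverage. I would invoke the standard no-starvation property of Thompson sampling with uniform $Beta(1,1)$ priors: over infinitely many communication rounds (as $B/h \to \infty$) every query currently present in $\mathcal{Q}$ is selected as $q^\ast$ infinitely often, since an arm pulled only finitely often would retain a fixed diffuse posterior and thus be reselected with probability bounded away from zero at each round. The bounded, strictly positive reward-scaling factor $\tfrac{N_q-n_q}{N_q}\,N_q\bigl(1-(1-\tfrac{1}{N_q})^m\bigr)$ does not affect this conclusion. Combined with the fact that every attribute-value actually present in the finite database appears in $\mathcal{S}$ with probability one in the limit, I would then induct on the depth $\le r$: the root is expanded infinitely often, so each of its children enters $\mathcal{Q}$; each such child is in turn selected infinitely often and expanded; and so on. Since every query in the full space lies at the end of such a length-$\le r$ path from the root, it is eventually added, so $\mathcal{Q}$ converges almost surely to the entire query space.

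Finally, once $\mathcal{Q}$ equals the full query space there is no longer any difference in the arm set, and the selection step $q^\ast=\argmax_{q\in\mathcal{Q}} r_q$ simply draws a Beta posterior sample for every query in the space, which is exactly naive \texttt{TMP}; moreover, by the law of large numbers every posterior concentrates on its true query precision, so the ancestor/descendant propagation merely accelerates---but does not alter---the limiting behavior. The main obstacle is the coverage step, because expansion follows only the selected best query rather than a fixed schedule; making this rigorous requires the no-starvation guarantee of Thompson sampling together with the depth-$r$ induction, rather than a naive ``one level per round'' counting argument.
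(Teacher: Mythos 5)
Your proposal is correct and follows essentially the same route as the paper's own proof: the uniform $Beta(1,1)$ prior guarantees every query in the pool is selected with nonzero probability, so in the asymptotic budget limit every branch is expanded, the pool $\mathcal{Q}$ grows to cover the entire query space, and at that point the selection rule coincides with naive \texttt{TMP}. Your version is simply more rigorous, making explicit the finiteness of the query space, the monotone growth of the pool, the depth-$r$ induction, and the no-starvation property that the paper only gestures at.
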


\begin{proof}
Given that every branch is initialized with a query precision $Beta(1, 1)$, there is a non-zero probability of selecting any branch using best query selection of \texttt{DT-TMP} (line 9 of \Cref{algo:decisiontree_thompson}). \texttt{DT-TMP} will therefore explores every branch of search tree at asymptotic limits of budget. Since, at asymptotic budget limits, all branches of the decision tree will be explored, hence the query pool will expand to cover the entire query space. When the query pool $\mathcal{Q}$ covers the entire query space, \texttt{DT-TMP} and \texttt{TMP} are identical. 
\end{proof}

\subsection{Guarantees of the proposed algorithm}
\label{subsec:guarantess}

Different number of attributes, attribute cardinalities, attribute distributions and the use of decision-based search tree structure makes the analysis of \texttt{DT-TMP} difficult. Furthermore, it is non-trivial to extend the standard regret analysis used for analyzing MABs to the \texttt{DT-TMP} algorithm. First, unlike MAB which has just one optimal arm (or query), a standard \texttt{DT-TMP}'s optimality involves a set of queries. Second, the underlying quality of a query is fixed in MABs while \texttt{DT-TMP} has unconventional reward feedback as described in~\Cref{subsec:proposed algorithm}. 
Third, on the issuance of a query, the MABs get one result while \texttt{DT-TMP} gets the result set $R$ that can be of size anywhere between $0$ and $m$.

In the following lemma,  we show that when specific query's quality is correlated with the general query's quality, \texttt{DT-TMP} based search tree is useful. 


\begin{lemma}
For a sufficiently large dataset when the query precision of specific queries within one general query are more similar to each another than specific queries of other general query (clustering effect), \texttt{DT-TMP} requires fewer number of queries to find the optimal query than \texttt{TMP}.
\end{lemma}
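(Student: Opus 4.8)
The plan is to reduce the comparison to a best-arm-identification problem and then count how many arms each sampler must distinguish. First I would invoke the ``sufficiently large dataset'' hypothesis to simplify the reward used in line~7 of Algorithm~\ref{algo:decisiontree_thompson}: when $N_q$ is large the depletion factor $\frac{N_q-n_q}{N_q}\to 1$ and the uniqueness factor $N_q\big(1-(1-1/N_q)^m\big)\to m$, so the expected reward of any query collapses to $m\cdot p_q$, proportional to the query precision $p_q$. Consequently both \texttt{TMP} and \texttt{DT-TMP} are effectively searching for the leaf query of maximal precision, and ``number of queries to find the optimal query'' becomes the sample complexity of identifying the best arm whose mean equals its precision.

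Next I would fix hierarchical notation: let there be $G$ general (first-level) queries, each with $L$ specific children, so the leaf space has size $K=GL$. Write $p_{g,\ell}$ for the precision of the $\ell$-th leaf under general query $g$ and $\bar p_g=\sum_\ell f_{g,\ell}\,p_{g,\ell}$ for the general query's precision, which by Lemma~\ref{lemma:centering} is its coverage-weighted child average. I would then state the clustering hypothesis as a separation condition: the within-cluster spread $\max_\ell|p_{g,\ell}-\bar p_g|$ is small relative to the between-cluster gap $\bar p_{g^\ast}-\max_{g\neq g^\ast}\bar p_g$, where $g^\ast$ is the cluster containing the optimal leaf. Under this separation Lemma~\ref{lemma:centering} also guarantees that the optimal leaf sits in the cluster of largest average precision, so the best general query is precisely the ancestor of the best leaf.

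With this in place I would analyze the two samplers as one-stage versus two-stage best-arm identification. For \texttt{TMP} all $K=GL$ leaves are independent arms, and the standard Thompson/best-arm lower bound forces it to pull every suboptimal arm; the dominant cost comes from the $L-1$ near-optimal leaves inside $g^\ast$ (tiny within-cluster gaps) together with the remaining $(G-1)L$ leaves, giving a count that grows like $\Omega(GL)$ up to the usual $\log$ and gap factors. For \texttt{DT-TMP}, the ancestor/descendant feedback rule means that during the early epochs the $G$ general-query arms accumulate samples, and a concentration argument using the large between-cluster gap lets it lock onto $g^\ast$ after $O(G)$ issuances; thereafter the expansion step (Algorithm~\ref{algo:arm_expansion}) instantiates only the $L$ children of $g^\ast$, so a further $O(L)$ issuances isolate the best leaf. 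Summing the phases gives $O(G+L)$, which is strictly smaller than $\Theta(GL)$ whenever $G,L\ge 2$, establishing the claim.

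The hard part will be the first phase of the \texttt{DT-TMP} analysis. The general-query arms are not a clean independent bandit: their Beta parameters are updated through the ancestor/descendant propagation of Algorithm~\ref{algo:decisiontree_thompson}, so their effective means track $\bar p_g$ only approximately and their update counts are coupled across tree levels. Making the ``identify $g^\ast$ in $O(G)$ pulls'' step rigorous therefore requires showing that these propagated Beta posteriors concentrate around $\bar p_g$ and that the between-cluster gap dominates the propagation noise; this is exactly where the clustering hypothesis earns its keep, and where I would spend most of the effort, likely via a Hoeffding- or Beta-posterior-concentration bound comparing the sampled rewards $r_g$ across general queries.
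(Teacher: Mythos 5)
Your proposal follows essentially the same route as the paper's proof: both reduce the claim to a best-arm-identification count, use the centering lemma together with the cluster-separation condition to place the optimal leaf inside the highest-precision general query, and then compare \texttt{TMP}'s one-stage search over all $GL$ leaves against \texttt{DT-TMP}'s two-stage search costing roughly $G+L$ (the paper writes this with explicit gap-dependent terms $\mathcal{O}(\epsilon^{-2}\ln(2/\delta))$ per suboptimal arm before collapsing to the same arm-count comparison under the assumption that the within-cluster spreads are comparable). The posterior-concentration-under-propagated-updates issue you flag as the hard part is simply asserted without proof in the paper, which likewise treats the disjoint-cluster case in detail and only gestures at the shared and overlapping cases, so your sketch is no less complete than the published argument.
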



\begin{proof}
For this proof, we shall consider three scenarios of the clustering of the precision values of specific queries centered around their corresponding general queries. 1) when the clusters are well separated or the general queries are disjoint, 2) when any two cluster among the clusters are disjoint except few specific queries that are common to both. 3) when the clusters corresponding to the general queries are overlapping. 

First, consider that there are $n$ disjoint general queries $q_1, q_2, \dots q_n$ and their corresponding query precision be $p_1, p_2, \dots p_n$. Further, assume that there are $n_i$ specific queries, $q_{i, j}$ where $j \in \{1, 2, \dots n_i\}$ within a general query $q_i$. From Lemma \ref{lemma:centering}, the query precision of specific queries $q_{i, j}$ are clustered around $q_i$'s precision, i.e. the specific query $q_{i,j}$'s precision $p_{i,j}$ lies in the range $[p_i - \Delta p_i, p_i +  \Delta p_i]$ where $\Delta p_i > 0$. Since, the clusters are well separated we note that the query precision of the specific queries satisfy the following condition: $|p_i - p_j| > \Delta p_i + \Delta p_j$ for any pair of general queries $q_i$ and $q_j$. The above condition would imply that the all specific queries corresponding to a general query are well-separated. First, we shall proof the separation of clusters composed of precision values of specific queries under the aforementioned condition. Next, we shall proof the efficacy of \texttt{DT-TMP} over \texttt{TMP} when the clusters satisfy the condition.

We shall now prove that the precision range of specific queries corresponding a general query $q_i$ is disjoint from the range of specific queries' precision corresponding to another general query $q_j$ where $i \ne j$. 

Without loss of generality, we assume that $p_i > p_j$. From the clustering condition stated above, $ p_i - p_j > \Delta p_i + \Delta p_j$. By re-arranging the terms, we get 
\[p_i - \Delta p_i > p_j + \Delta p_j\]
Since, $\Delta p_i, \Delta p_j > 0$, thus 
\[p_i+\Delta p_i > p_i - \Delta p_i >  p_j + \Delta p_i > p_j - \Delta p_j\]
It follows form the above inequalities that,  \[[p_i-\Delta p_i , p_i+\Delta p_i] \cap [p_j-\Delta p_j , p_j+\Delta p_j] = \phi\]

Thus, we observe that the precision range of specific queries belonging to a general query $q_i$ which is $[p_i-\Delta p_i , p_i+\Delta p_i]$ doesn't overlap with precision range induced by any another general query $q_j$.

We shall now use the above clustering condition of the precision ranges of the specific queries to show that the \texttt{DT-TMP} requires lesser number of samples than \texttt{TMP} to find the best query in the query space. 

Without loss of generality, assume that $p_1 > p_2 > \dots > p_n$. Further, assume the query precision of a specific query $q_{i, j}$ within the general query $q_i$ are ordered by their precision values $p_{i, j}$. We shall now show the sample complexity of \texttt{DT-TMP} for identifying the best query $p_{1, 1}$ is lesser than the sample complexity of \texttt{TMP}. Note that it follows from Lemma \ref{lemma:centering} that $p_{1, 1} \ge p_1$.

From sampling theorem~\cite{bousquet2003introduction}, we know that the number of samples from sub-optimal query $q_{i, j}$ needed to discern the best query $q_{1, 1}$ with a confidence interval of $\delta$ is $\mathcal{O} (\frac{1}{\epsilon^2} \ln \frac{2}{\delta})$, where $\epsilon = p_{1, 1} - p_{i, j}$.

\texttt{TMP} searches among the specific queries corresponding to all general queries to identify the best query $q_{1, 1}$. Thus, \texttt{TMP} would have a sample complexity of $\sum_{i=1}^{n} \sum_{j =1}^{n_i} \mathcal{O} (\frac{1}{\epsilon_{i, j}^2} \ln \frac{2}{\delta})$, where $[i, j] \ne [1, 1]$ to find the best query $q_{1, 1}$. 

On the other hand, \texttt{DT-TMP} is a two phase sampler. In the first phase, it identifies the best cluster or general query. In the second phase, it identifies the best specific query within the best chosen general query $q_1$ (cluster). In the second phase, \texttt{DT-TMP} explores among only the specific queries corresponding to $q_1$ to identify $q_{1,1}$. Thus, the sample complexity of \texttt{DT-TMP} to identify the best specific query $q_{1, 1}$ is, 
\[ \underbrace{ \mathcal{O} (\sum_{i=2}^{n} \frac{1}{(p_1 - p_i)^2} \ln \frac{2}{\delta}) }_\text{phase 1} +  \underbrace{ (\sum_{j=2}^{n_1} \frac{1}{(p_{1, 1} - p_{1, j})^2} \ln \frac{2}{\delta}) }_\text{phase 2} \]

Note, that the phase two of \texttt{DT-TMP} matches with the best query finding among the specific queries of query $q_1$. However, we note that the sample complexity for searching among the specific queries $q_{i, j}$ where $i\ne 1$ is lesser than the sample complexity involved in finding the best clusters. Assuming that $\Delta p_i$'s are similar (i.e., $\Delta p_i \approx \Delta p_j, \forall i, j$), we note that \texttt{TMP} would require around $\mathcal{O}(\sum_{j=1}^{n} n_j \frac{1}{(p_1 - p_i)^2} \ln \frac{2}{\delta}))$ sample complexity in-comparison to \texttt{DT-TMP}'s $\mathcal{O}(\frac{n_1 n}{(p_1 - p_i)^2} \ln \frac{2}{\delta})$.

Thus, we observe that as the number of queryable attributes $n$ increases and the attribute cardinalities of the attributes $n_i$ increases, \texttt{DT-TMP}'s relative improvement over \texttt{TMP} increases.

Second, we note that when the specific clusters are shared, it can be reduced to disjoint case by considering two sub-cases: a) when $q_{1, 1}$ is a shared specific query, thus even if \texttt{DT-TMP} chooses a a sub-optimal query will lead to $q_{1, 1}$. b) when $q_{1, 1}$ is not shared, it can still be proved that the query precision of $q_1$ is highest and thus \texttt{DT-TMP} finds the optimal query. Third, when the clusters are overlapping, it is harder to proof the efficacy of \texttt{DT-TMP} since the best specific query may belong to a sub-optimal general query. We leave the proof of overlapping case for future work.

\end{proof}

\section{Offline Experiments}
\label{sec:experiments_offline}

In this section, we present experimental findings by executing different hidden population sampling strategies over offline real-world datasets. First, we discuss the the real-world datasets and query interface that allows samplers to access the entities within these datasets. Next, we discuss the evaluation metric used to compare the efficacy of the baseline and proposed samplers. Thereafter, we present a brief summary of various baseline sampling strategies. Finally, we show the results of the samplers' performances and interpret the results.

\subsection{Datasets}
\label{subsec:datasets}

Now, we present a summary of three real-world datasets used for offline experiments along with a description of the query interface (API).

In similarity to the offline experiments in \citep{sheng2012optimal}, we simulate a typical online social platform using data from three real-world entity datasets. The datasets---Patent~\cite{hall2001nber}, Auto\footnote{\url{https://www.kaggle.com/orgesleka/used-cars-database}}, and Adult~\cite{kohavi1996scaling}---are obtained from notably diverse domains. The datasets vary in the number of attributes, their attributes' cardinality and the distribution of arm sizes of the attributes (i.e. number of records across attribute values). The local server allows us to vary the parameters of sampling such as the result size, the attribute cardinalities and the attribute correlation values. We query the local server using queryable attributes and evaluate our algorithms based on the coverage of the hidden population entities. \Cref{tab:dataset_stats} summarizes the statistics of the three datasets concerning the respective hidden target population as: patents authored by Japanese researchers, automobiles that have less than $40K$ kilometers mileage and adults who earn more than \$50K per annum. We use all queryable attributes specified in the table for searching the corresponding hidden target population in the datasets.

\textit{Hidden target selection.} For experimental validation, we choose the hidden property of the hidden population as the attributes that are not expressible as a combination of one or more queryable attributes. For example, the queryable attributes such as category, subcategory and class of a patent cannot be used to search for authors with a specific nationality (hidden property). More importantly, the choice of hidden property for offline datasets was motivated by real-world scenarios. We note that academic search engines such as PubMed, Google Scholar and Microsoft Search are not searchable using properties such as the author's nationality and their writing style. Similarly, mileage information is a non-queryable property of the automobiles in the popular advertisement website Craigslist. Lastly, income is often a hidden non-queryable property of users in existing search interfaces of popular sites such as LinkedIn and Angelist but can be easily inferred given their job description. In out datasets, the target population comprises 18.73\%, 5.58\% and 23.93\% of the population size in Patent, Auto and Adult datasets respectively.

\textit{Patent} dataset is a collection of two million patent records from US patent records. The queryable attributes includes categorical attributes such as category and subcategory of patents. Given that nationality of the inventor is a hidden property, we choose the patents that are invented by Japanese researchers as the target entities in the dataset. Patents authored by Japanese researchers was chosen for only illustrative reasons. We observe similar empirical performance across samplers when the target population is set to patents authored by researchers from other countries such as China and India.

\textit{Auto} dataset is a collection of 371,469 auto-vehicle records crawled from Ebay-Kleinanzeigen. The set of queryable attributes allows users to search for cars by searching over a variety of attributes such as car type, model and their brand. For this dataset, we define the hidden target population of automobiles by setting an arbitrary threshold of their travel miles.

\textit{Adult} dataset is a collection of 48,842 records of adults extracted by Barry Becker from the 1994 Census database. Similar to online social networks such as Facebook and Pokec, the users can be searched within the entity database based on their public attributes such as education, marital status and gender. For this dataset, we consider income as the private or hidden attribute of an individual. 

In the absence of a well-defined query interface system, we simulate the query interface in the following way. The query interface to the datasets allows only conjunctive queries formed using the queryable attributes listed in \Cref{tab:dataset_stats}. For each query, the user incurs a unit cost in API call. The query interface returns a set of $k$ random results (default value of page size is set to 10 unless otherwise stated) drawn from the query matching entities with replacement. We observe similar empirical result across samplers when the query interface returned  $k$ random results drawn from the matching entities without replacement.

\begin{table}[h]
	\caption{Description of real-world dataset: Patent, Auto and Adult and their respective queryable attributes with cardinalities in parenthesis, and the target hidden attribute.}
    \label{tab:dataset_stats} 
	\begin{tabular}{cp{4cm}c}
		\toprule
		\textbf{Dataset} & \multicolumn{1}{c}{\textbf{Queryable attributes}} & \multicolumn{1}{c}{\textbf{Hidden property}} \\ \midrule
         Patent  & category (6), subcategory (26), assignee type (7), nclass (417) &  inventor's nationality\\
         Auto & vehicle type (9), model (252), brand (40), fuel type (8), repairing (3) & mileage \\
          Adult & class (9), education (16), marital status (7), occupation (15), relationship (6), sex (2) & income \\
        \bottomrule             
\end{tabular}
\end{table}





\subsection{Evaluation}
\label{subsec:Evaluation}

 We assess the performance of a sampler by the number of distinct entities of the hidden target population that the sampler collects within a given query budget $B$. Alternative definitions include coverage, query harvest rate and precision~\cite{larson2010introduction}. Owing to the similarity in performance of samplers across the aforementioned evaluation metrics, we focus on the simplest evaluation metric: recall. Formally, recall ($R$) of a sampler after making budget $B$ API calls is calculated as follows, 
\begin{equation}
R = \frac{\text{\#(hidden target entities retrieved)}}{\text{\#(target entities in dataset)}} = \frac{N_S}{N_D}
\end{equation}

We observe that recall $R$ is a very small quantity under limited budget constraint because of the large number of $N_D$ target entities in datasets like Patent. Since, it is possible to sample only a limited  $N_S$ number of target entities to be sampled within a limited budget, the overall recall value across all samplers is pretty low ($\frac{N_S}{N_D} << 1$). Therefore, we normalize the recall values by the theoretically maximum recall attainable at a budget $B$ which is same as getting all target entities in every result of API call i.e. $(\text{page-size} \times B)$ unique target entities. 

All evaluation results are reported over 100 independent runs. 

\begin{figure}[h]
\centering
\includegraphics[width=\linewidth, trim={4cm 0.3cm 5.5cm 1cm},clip]{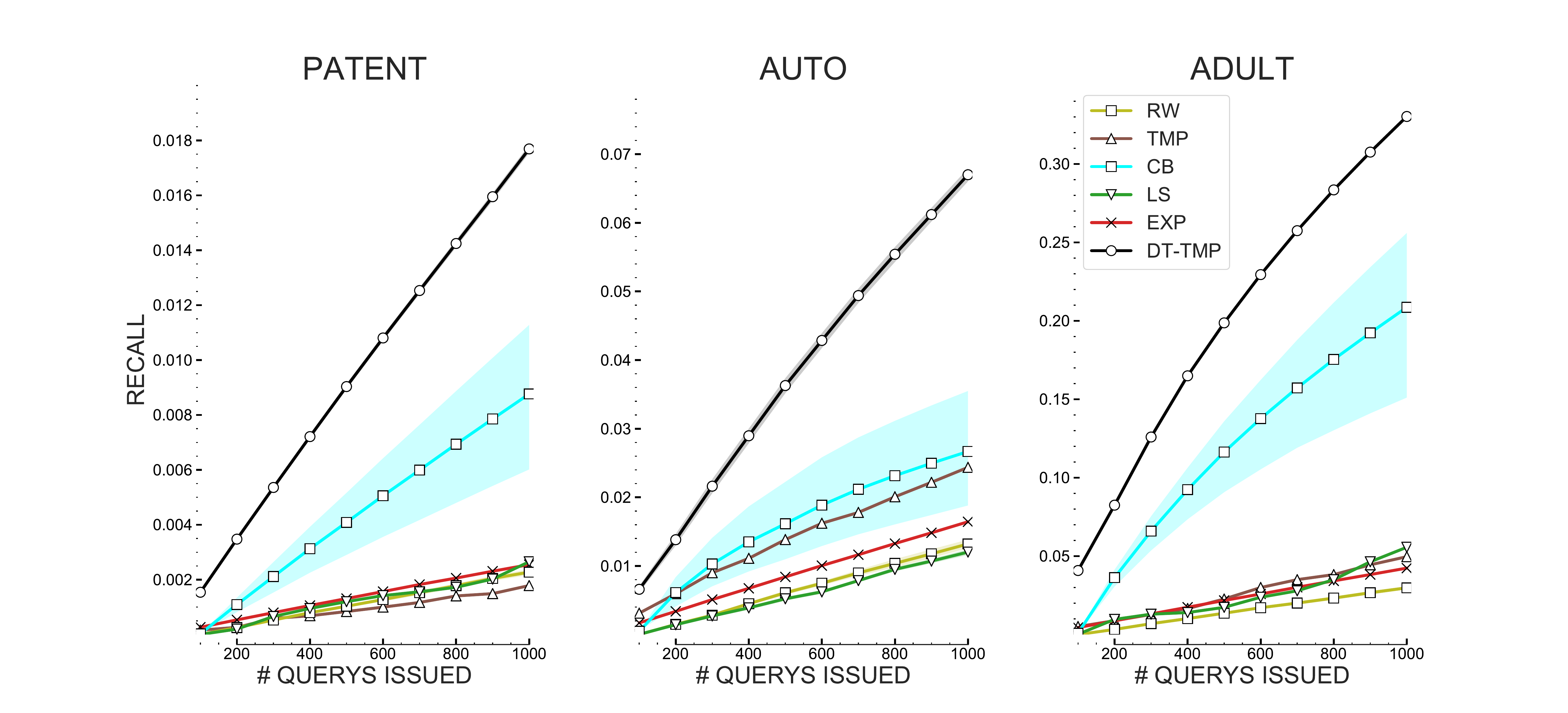}
\caption{Sampling performance of baseline and proposed \texttt{DT-TMP} sampler. \texttt{DT-TMP} is shown to be the best sampling strategy. Decision tree based search allows \texttt{DT-TMP} to explore high yielding queries in a combinatorial query space while simultaneously exploring-exploiting high yielding queries. The bands indicate 95\% two-sided confidence interval. 
}
\label{fig:samplers_performance}
\end{figure}

\subsection{Baselines}

We now enumerate different sampling strategies to compare against our proposed sample.

\begin{itemize}
  \item Uniform sampling over entire database (\texttt{UNI}). As evident from the name, this sampler samples the hidden population by repeatedly issuing the most generic query until it exhausts the budget $B$. Thus, the sample obtained using this sampler is a uniform sample over the entire population. 
  \item Uniform query sampling from the query space or pure exploration sampling (\texttt{EXP}). At each time step, EXP queries the web API by randomly sampling with replacement a single query from all possible queries. Thus, this method performs exploration for $B$ rounds, and hence named as exploration sampling or EXP.
  \item Thompson sampling (\texttt{TMP}) is a standard Thompson sampler~\cite{thompson1933likelihood} where the reward from each arm and draws the best expected arm in each draw. In other words, it is  a \texttt{DT-TMP} sampler without decision tree.
    
\item Lazy slice cover search (\texttt{LS})~\cite{sheng2012optimal} is an optimal algorithm for retrieving the entire entity set from the online social platforms while minimizing the number of queries. 

\item Content-based search (\texttt{CB})~\cite{nazi2015querying} is a greedy query design algorithm that uses the sampled entities to construct new high yielding queries that are unique to the hidden population using the \textit{tf-idf} (term frequency inverse document frequency) ranking.

\item Random Walk (\texttt{RW})~\cite{dasgupta2007random} is an efficient algorithm for randomly sampling from the entity database by creating queries via random walk approach over the space of
queries. 

\end{itemize}

Several works in reinforcement learning literature~\cite{kaelbling1996reinforcement} exist such as Successive Elimination, UCB, UCT which along-with Thompson sampler are known to be optimal for handling exploration-exploitation in MABs, but we exclude them from this study since it is not the focus of this work. 


Finally, we set the epoch $h$ of \texttt{DT-TMP} sampler by default to 10 for all datasets. This setting ensures that the sampler uses feedback gained from $10m$ (typically 100) new observations to expand the query pool appropriately.



\subsection{Results on offline datasets}
\label{subsec:offline_results}
We now present a detailed description of experimental findings obtained from experiments performed on real-world datasets.

Now, we present the performance of baseline and proposed samplers under a variable query budget (ranging from 100 to 1K API calls) on the three real-world datasets---Patent, Auto and Adult. \Cref{fig:samplers_performance} shows the recall value for different sampling strategies across varying query budgets. We observe similar performances of naive MAB based samplers, \texttt{EXP} and \texttt{TMP} along-with \texttt{UNI}. The low performance of naive MAB based samplers is expected given the exponential query space. The naive samplers have to explore among 2600, 11314 and 5970 non-empty queries (queries that have at least one matching entity in the dataset) in Patent, Auto and Adult dataset respectively. Thus, the exponential size of the query space causes these samplers to get stuck in the exploration phase. Notice that there are 630K, 2M, 181K possible queries obtained from the combinatorial combination of attribute-values corresponding to the queryable attribute ($\prod_{i=1}^{r} d_i$) but only a few non-empty queries. In our experiments, we allow the non-empty queries to be used as the arms of baseline samplers; \texttt{DT-TMP} however lacks this information. However, we assume that \texttt{DT-TMP} obtains $N_q$ number of matching entities corresponding to a query $q$ when query is issued as observed in real-world APIs like Google, Amazon, Facebook and LinkedIn API. The \texttt{UNI} sampler that samples random entities from the entire population performs significantly better than naive MAB samplers; however, note that \texttt{UNI} is used as a theoretical baseline and not a feasible sampler for several real-world online platforms like Facebook, LinkedIn and Twitter. Similar to \texttt{UNI}, \texttt{RW} and \texttt{LS} samplers are not target specific samplers. The greedy based \texttt{CB} suffers from the problem of getting stuck in local high quality queries, whereas \texttt{DT-TMP} by the virtue of Thompson sampling maintains an optimal exploration-exploitation tradeoff. Decision tree based MAB sampler (\texttt{DT-TMP}) significantly outperforms the second best baseline sampler by a margin of $101.86\%$, $151.16\%$ and $58.30\%$ in Patent, Auto and Adult dataset respectively. High recall performance of \texttt{DT-TMP} is due to the fact that it exploits the hierarchical structure of the combinatorial action space to greedily explore the attribute combinations (query) that yield a high number of hidden target entities.

In this section, we discussed the experimental outcomes of offline real-world experiments. We used state-of-the-art MAB sampler (\texttt{TMP}) and hidden database samplers (\texttt{RW}, \texttt{LS} and \texttt{CB}) as the baseline samplers. \texttt{DT-TMP} is shown to perform remarkably well over all hidden population tasks across three different datasets by exploiting the structure in combinatorial query space. In the next section, we observe the sampling performances on online real-world datasets. 
\section{Online Experiments}
\label{sec:experiments_online}

In this section, we deploy our different sampling strategies on three real-world online web-query platforms---Twitter, RateMDs and GitHub.

\subsection{Twitter}

Twitter is a popular micro-blogging website that allows users to interact with each other via short messages called as ``tweets''. As of December 2017, Twitter reported an estimate of 330 million monthly active users generate half a billion tweet everyday \cite{twitter2018}. Given the enormous content size and the API limitations, it is infeasible to sample the entire content. Thus, we need to design effective sampling strategies to sieve just the relevant information relating to the hidden population.

Twitter allows combinatorial queries for very few attributes; we use location and hashtags as the two queryable attributes. We consider the hashtags of top 10 National Football League (NFL) teams in USA \footnote{\url{https://www.usatoday.com/sports/nfl/rankings/}} as the first queryable attribute. The second queryable attribute is the home cities corresponding to the 10 NFL teams. 

Twitter REST API is used to gather tweets corresponding to all possible combinatorial attributes (query). Since Twitter API allows only seven days of data to be accessed, we collected all possible queryable tweets between a fixed time frame of 6 days (26 April 2018 till May 1, 2018). Note, that Twitter API \footnote{\url{https://developer.twitter.com/en/docs}} returns by default of 20 tweets per API call and a maximum of 100 tweets. We vary the result-size from 10 to 100 and observe negligible impact of the page size on relative performance of the samplers. Further, the API returns tweets ordered by tweet's creation time, i.e. for a fixed query, Twitter returns the most recent tweets as the first page and older tweets as the subsequent pages.

We consider three hidden population sampling tasks on Twitter. We employ properties of Twitter users that are not queryable via Twitter API to define three hidden populations: female users, users who have verified Twitter accounts, and early adopters of Twitter based on the when the users created their account. We use an off-the-shelf gender predictor~\cite{mullen2015gender} as our ground truth classifier for predicting user's gender; other two properties are available from user profile information. Consider a sports advertiser interested in reaching out to potential football fans who are female on Twitter. Such an advertiser would want to maximize the coverage of female Twitter users in their sample.

We use throughput-rate as the online sampling evaluation metric. In the absence of information about the size of the underlying target population, it is not possible to use recall as the evaluation metric. In the context of hidden population sampling, we define the throughput rate as the ratio of the number of unique target entities sampled to theoretical maximum possible target entities that can be sampled. Thus, the throughput rate ($TR$) of a sampler after making budget $B$ API calls is defined as, 
\begin{equation}
TR = \frac{\text{\#(hidden target entities retrieved)}}{B \times k}
\end{equation}
where $k$ is the page size. Note that throughput rate penalizes queries that yield results of size less page size. For example, consider for page size $k = 20$, a query $q_1$ that returns just a set of 5 target entities has the same throughput rate as another query $q_2$ that returns 5 target and 15 non-target entities, since the API cost incurred by both queries is same, and they both discover 5 target entities.

\begin{figure}
\centering
\includegraphics[width=\linewidth,trim={2cm 0.3cm 5.5cm 1cm},clip]{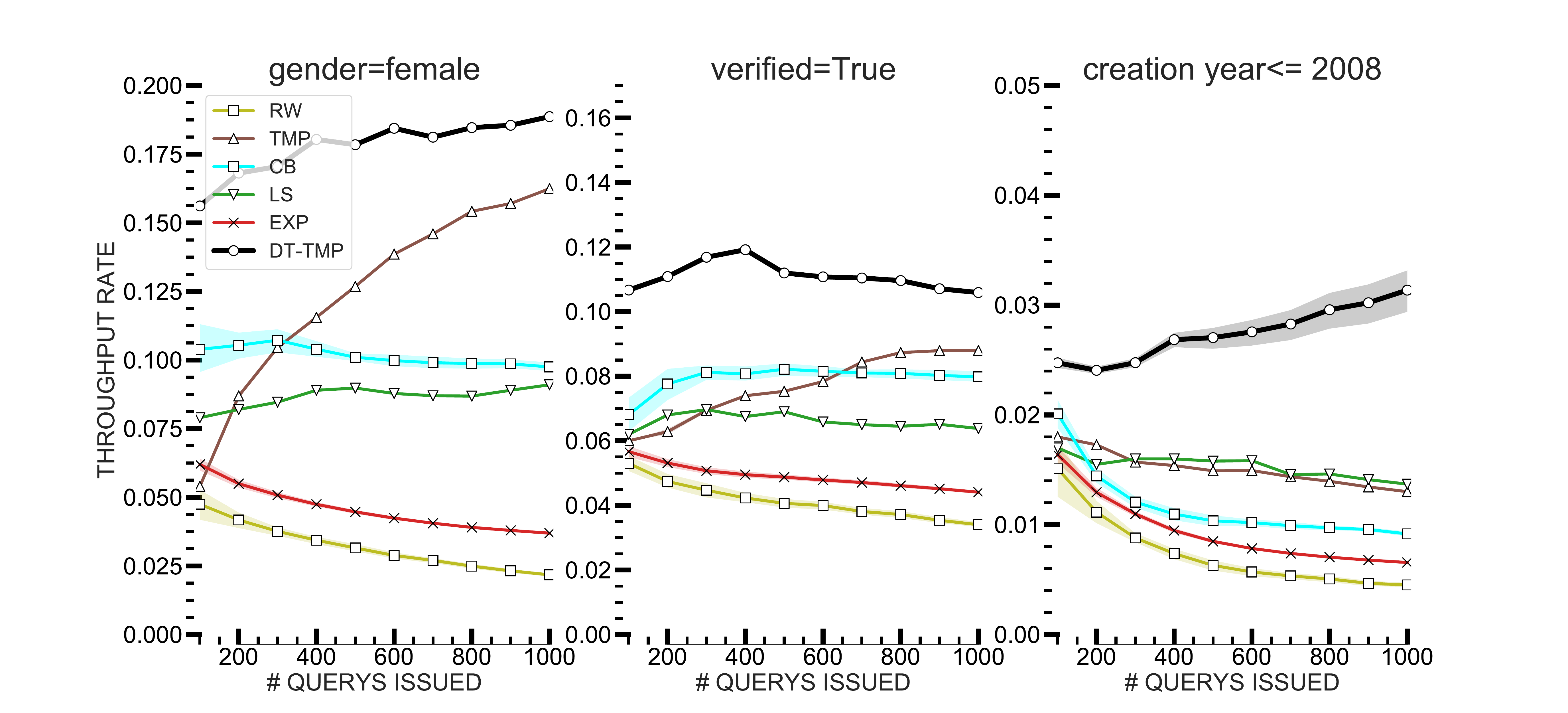}

\caption{Throughput rate for different sampling strategies in Twitter. Combinatorial MAB (\texttt{DT-TMP}) does the best. }
\label{fig:twitter_throughput}
\end{figure}

Due to API restrictions, we can implement only certain samplers on Twitter. Since Twitter API doesn't support uniform sampling, UNI could not be implemented. Therefore, we compare our sampler with five baseline sampling strategies---\texttt{EXP}, \texttt{LS}, \texttt{RW}, \texttt{CB} and \texttt{TMP}. Further, non-random ranking and absence of query size restricts the implementation of \texttt{DT-TMP}. This lack of information about matching results in Twitter of a given query led us to modify the expected reward to the fraction of target entities discovered in each API call. Furthermore, note that the individual attributes, i.e. hashtags and locations are the most general queries on Twitter.

\Cref{fig:twitter_throughput} reveals that \texttt{DT-TMP} is the predominant sampling strategy. It outperforms the second best sampler \texttt{TMP} by a margin of 42.7\% when the task is to sample female users. It shows similar improvement of 39.83\% and 79.24\% over the second best samplers when the hidden target population is set to users that have verified accounts and when the target population of early Twitter adopters respectively. Consistently superior of \texttt{DT-TMP} across different tasks demonstrates the usefulness of the algorithm.

\subsection{RateMDs}
RateMD (\url{https://ratemds.com}) is a free healthcare website that allows users to read and submit reviews about doctors. According to the website, more than 100 million potential patients use RateMDs for information before making important healthcare decisions. There are four queryable attributes---gender, specialties, verified and patient acceptance---used to search for doctors. For gender information, the users have two options, male and female. For specialties, the users can specify one of the 57 specialties of doctors including dentist, pathologist and pediatrician. For patient acceptance, the users can restrict the result on doctors who currently accept new patients. For doctor verification, the users can restrict the search to only doctors verified by RateMD. Each query returns one page of 10 doctors and the user can also specify which page to retrieve. We obtained the dataset by querying 10 pages for each of the attribute-value combinations in August 2018.
Each doctor page is a profile consist of user ratings, credentials and acceptable insurance. 

We consider three hidden population of doctors: doctors with 5-star rating, doctors who received more than 10 ratings, and doctors who accept at least three insurance. The experiments are conducted on the baseline and the proposed samplers; the results are evaluated using the same metric as aforementioned online experiments.  We observe the superior performance of \texttt{DT-TMP} in \Cref{fig:ratemd_throughput} across different tasks. It outperforms the competition by a margin of 55.8\%, 64\% and 25.7\% over the three different tasks. Overall the throughput rate falls with sampling budget. Since, the number of doctors are limited, the rate at which newer target doctors are found decreases over time. 



\begin{figure}
\centering
\includegraphics[width=\linewidth,trim={2cm 0.3cm 5.5cm 1cm},clip]{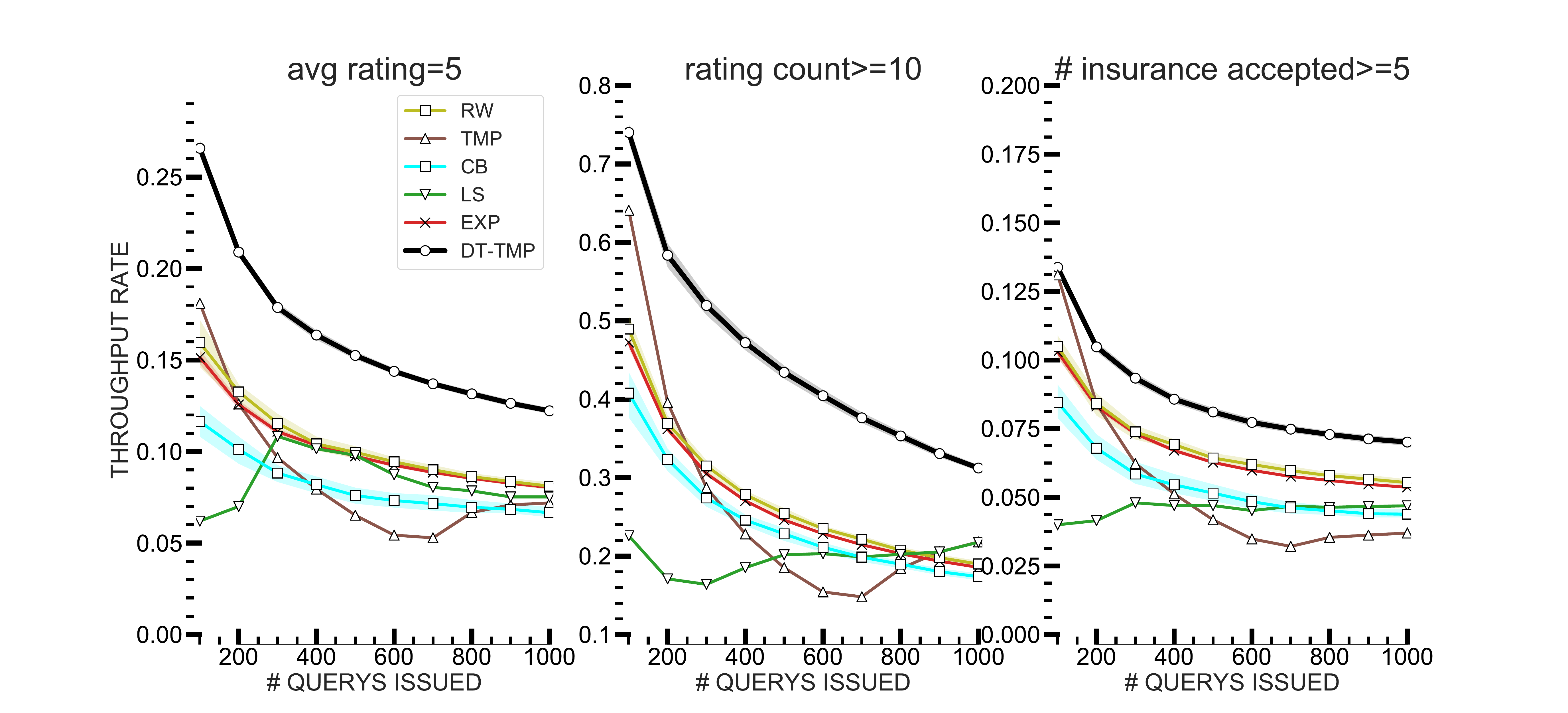}
\caption{Throughput rate for different sampling strategies in RateMD. Combinatorial MAB (\texttt{DT-TMP}) does the best. }
\label{fig:ratemd_throughput}
\end{figure}

\subsection{GitHub}

GitHub is the most popular open-source version control system that allows individuals to manage and collaborate on software-related projects. As of April 2017, Github reported an estimate of 20 million users and 57 million repositories \cite{github2017}. Unlike Twitter, GitHub allows a number of user attributes to query for users. We use three queryable attributes. For the first queryable attribute, we use the ten most popular programming languages used on GitHub \footnote{\url{http://githut.info/}} to search for users using these languages in their projects. For the second queryable attribute, we discretize the ``number of followers'' attribute into three queryable ranges: $\le10$, $>100$ and otherwise. For the third queryable attribute, we discretize the number of repositories of a user into two queryable ranges: $\le10$ and $>10$. Similar to Twitter, Github returns by default a set of 20 users for each query and 100 users at maximum. Furthermore, for a given query, GitHub API returns a maximum possible results of 1000. We use the default ranking of API to get the results of a query.

For the first task, we consider GitHub users whose nationality is China as the first hidden target population. For identifying Chinese users, we employ the location information in the users' profile to predict their nationality. For the next two tasks, we set hidden target population to committed GitHub users (users who contributed to projects on more than 50\% of days in the year 2017) and users who work or study at educational institutions (inferred by their email address). The experiments are conducted on the baseline and the proposed samplers; the results are evaluated using the same metric as aforementioned online experiments.

\begin{figure}
\centering
\includegraphics[width=\linewidth,trim={2cm 0.3cm 5.5cm 1cm},clip]{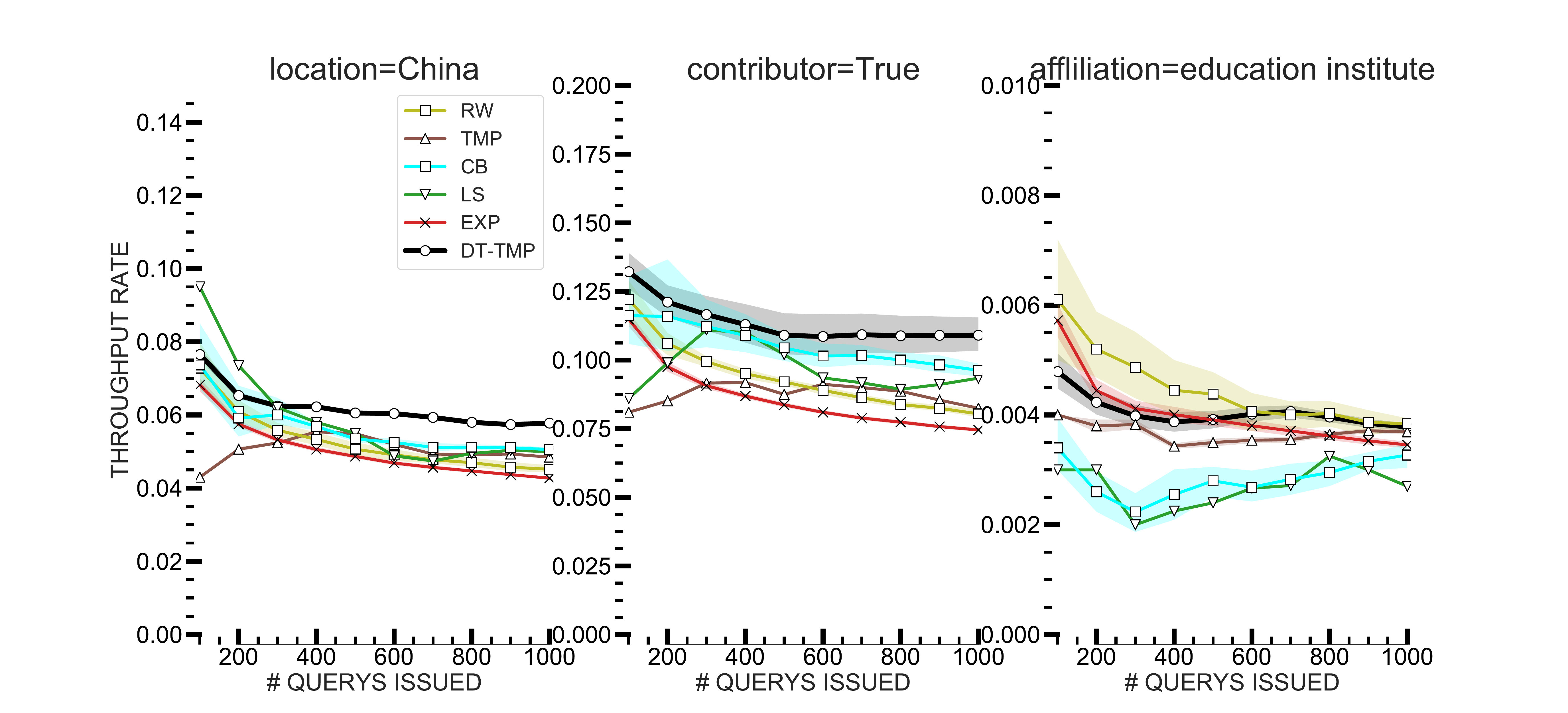}
\caption{Throughput rate for different sampling strategies in Github. Combinatorial MAB (\texttt{DT-TMP}) does the best. }
\label{fig:github_throughput}
\end{figure}

 \Cref{fig:github_throughput} reveals that \texttt{DT-TMP} is the best sampling strategy. It is statistically the best sampler for the first two tasks at a confidence interval of $95\%$. For the last task, since the overall number of accounts declaring education affiliation is very low, the feedback is very weak leading to similar poor performance across all task. 
\section{Discussion}
\label{sec:discussion}

In this section, we discuss ways to improve sampling of hidden target population by using attribute combinations for querying. In ~\cref{subsec:limitation}, we discuss limitations of our proposed approach.

\subsection{Why does combinatorial querying work?}
\label{subsec:independent_attr_results}

We observe that querying online APIs via combination of one or more attributes leads to higher coverage of hidden target population than querying via individual attribute at a time. In a non-combinatorial querying system, only one attribute can be used to define a query. Therefore, each queryable attribute $A_i$ contributes $d_i$ ($A_i$'s cardinality) different queries to the non-combinatorial querying space. Whereas, a combinatorial query space is defined by conjunction of one or more queryable attributes. The size of combinatorial query space is exponential. One of the advantages of non-combinatorial querying system over combinatorial querying system is its limited size of query space. This facilitates existing reinforcement learners to efficiently explore-exploit high yielding queries in non-combinatorial query systems. However, we shall show via \texttt{DT-TMP} sampler that correlation between attributes and hierarchical structure within combinatorial query space can be exploited to design even more efficient sampling strategies.

Furthermore, using a specific sampling scenario, we prove the advantage of using combinatorial query system over non-combinatorial query system.

\begin{lemma}
For a sufficiently large dataset and a query sytem defined over a uniformly distributed attribute, \texttt{DT-TMP} requires a fewer number of queries to find the best query when the universal query (`*') is available than \texttt{TMP}. 
\end{lemma}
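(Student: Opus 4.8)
The plan is to specialize the two-phase sample-complexity comparison developed for the clustering lemma to a single, uniformly distributed queryable attribute $A$ of cardinality $n$, and to make explicit the role played by the universal query `*'. Let the $n$ specific queries $q_1,\dots,q_n$ obtained by fixing $A$ to each of its values have precisions $p_1>p_2>\dots>p_n$, and let $q_*$ denote the universal query. Because the attribute is uniformly distributed, each specific query covers a fraction $f_i=\tfrac1n$ of the matching population, so Lemma~\ref{lemma:centering} gives $p_*=\tfrac1n\sum_{i=1}^n p_i$; in particular a single execution of $q_*$ returns, in expectation, $m/n$ results drawn from each of the $n$ specific queries.

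First I would bound the number of queries \texttt{TMP} needs. Lacking access to `*', \texttt{TMP} can gather evidence about a specific query $q_i$ only by issuing $q_i$ itself, so by the sampling theorem it must collect $\mathcal{O}\!\big(\tfrac{1}{(p_1-p_i)^2}\ln\tfrac2\delta\big)$ results from every suboptimal arm before it can certify $q_1$, giving a total cost of $\tfrac1m\sum_{i\ne1}\mathcal{O}\!\big(\tfrac{1}{(p_1-p_i)^2}\ln\tfrac2\delta\big)$ API calls; crucially it must spend at least one dedicated call on each of the $n$ arms merely to touch them.

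Next I would bound \texttt{DT-TMP}. The key observation is that since `*' is available and the descendant-update rule of \Cref{algo:decisiontree_thompson} credits each returned entity to whichever specific query it matches, one execution of $q_*$ advances the posterior of every arm it touches at once; under the uniform distribution the $m$ results spread evenly across values, so a single call supplies a fresh sample to $\min(m,n)$ distinct arms. Consequently \texttt{DT-TMP} discovers and begins estimating all $n$ arms after only $\lceil n/m\rceil$ calls, against the $n$ calls \texttt{TMP} needs to touch every arm — a saving of a factor of the page size $m$ in the discovery phase. Moreover, because the yield of $q_*$ equals the average precision $p_*$ rather than the worst-arm precision, these exploratory calls are not squandered on low-precision arms the way \texttt{TMP}'s dedicated probes of the $q_i$ with small $p_i$ are.

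The hardest part will be carrying this advantage through the refinement phase, where both samplers must still certify $q_1$ against the runner-up $q_2$: per API call the total number of target samples returned is identical whether one issues `*' once or a single arm once, so the naive counts coincide and the strict improvement is not automatic. I would close this gap by charging \texttt{TMP} a separate $\tfrac{1}{(p_1-p_i)^2}\ln\tfrac2\delta$ exploration term for each of the $n-1$ suboptimal arms it must individually rule out, against \texttt{DT-TMP}'s single shared exploration through $q_*$, paying only an extra $\ln n$ factor from a union bound over the $n$ simultaneous comparisons, which is dominated once $n$ is large and the empirical precisions concentrate on a sufficiently large dataset. Finally, lifting the single-attribute statement to $r$ attributes replaces this additive saving by the multiplicative separation $\sum_i d_i$ versus $\prod_i d_i$ between the hierarchical and flat query spaces, which is where the exponential benefit of combinatorial querying with `*' ultimately originates.
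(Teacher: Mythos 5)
Your core counting idea is the same one the paper uses: because the attribute is uniformly distributed, a single execution of the universal query spreads its $m$ results evenly over the $k$ attribute values, so per-arm evidence accumulates in parallel, whereas \texttt{TMP} must pay a dedicated (and partially wasted) page for each arm. The packaging, however, is genuinely different. The paper's proof is a one-line ceiling comparison: it posits a fixed per-arm sample requirement $m'$ and compares \texttt{TMP}'s cost $k\lceil m'/m\rceil$ against the `*'-based cost $\lceil k m'/m\rceil$, declaring the latter smaller. You instead run a two-phase, gap-dependent PAC analysis (discovery then refinement), with per-arm complexities $\mathcal{O}\bigl(\tfrac{1}{(p_1-p_i)^2}\ln\tfrac{2}{\delta}\bigr)$ and a union bound costing an extra $\ln n$ factor. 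Your version is more refined and, importantly, more honest: you explicitly flag that in the refinement regime the total number of samples per API call is the same under either strategy, so the advantage is not automatic there. This is a real issue the paper glosses over --- its claimed strict inequality $\lceil k m'/m\rceil < k\lceil m'/m\rceil$ in fact fails with equality whenever $m'/m$ is an integer, and the saving is substantial only when $m' \ll m$ (your ``discovery phase''), where it reaches a factor of roughly $\min(m,k)$. The price of your extra machinery is that the refinement-phase argument (union bound plus concentration ``on a sufficiently large dataset'') is left as a sketch rather than carried out, and your final paragraph about lifting to $r$ attributes and the $\sum_i d_i$ versus $\prod_i d_i$ separation goes beyond what this single-attribute lemma claims; the paper does not attempt either. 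Net: same mechanism, but the paper buys brevity with an idealized assumption ($m'$ fixed and known), while you buy a sharper delineation of exactly when `*' helps at the cost of an incomplete second phase.
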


\begin{proof}
Lets consider a query system defined over a uniformly distributed attribute with $1, 2, \dots k$ attribute values. Further, consider that $m$ samples are returned in a single result page for each API call. Further assume that only $m'$ samples from each attribute value is needed to discern the best attribute value at a given confidence interval $\delta$~\cite{even2002pac}. 

Thus, a naive query system would require $\lceil \frac{m'}{m} \rceil$ API calls corresponding to each attribute value to figure out the best query. Thus the total number of API calls issued by the non-combinatorial query system is $k \lceil \frac{m'}{m} \rceil$. 

However, consider another query system that allows `*' query. In other words for each API call, we obtain $m$ samples that are drawn from any of the $k$ attribute values. Given that the attribute is uniformly distributed, therefore only $\lceil k \frac{m'}{m} \rceil$ expected number of API calls are required to obtain $m'$ samples for each attribute value. 

Since, $\lceil k \frac{m'}{m} \rceil < k \lceil \frac{m'}{m} \rceil $, therefore a combinatorial query requires fewer number of queries in expectation than a non-combinatorial query system.  
\end{proof}

%
%
%

\Cref{fig:recall_independent_vs_combination} shows our proposed \texttt{DT-TMP} sampler that uses combinatorial querying system outperforms all non-combinatorial based samplers. Owing to the efficient utilization of hierarchy in query space, \texttt{DT-TMP} outperforms its competition. At a query budget of 1000, \texttt{DT-TMP} outperforms the best non-combinatorial query sampler, \texttt{TMP}, by margin of 112.75\%, 23.25\% and 10.64\% on the datasets---Patent, Auto and Adult respectively.

\begin{figure}
 \centering
 \includegraphics[width=\linewidth,trim={5cm 0.3cm 5.5cm 1cm},clip]{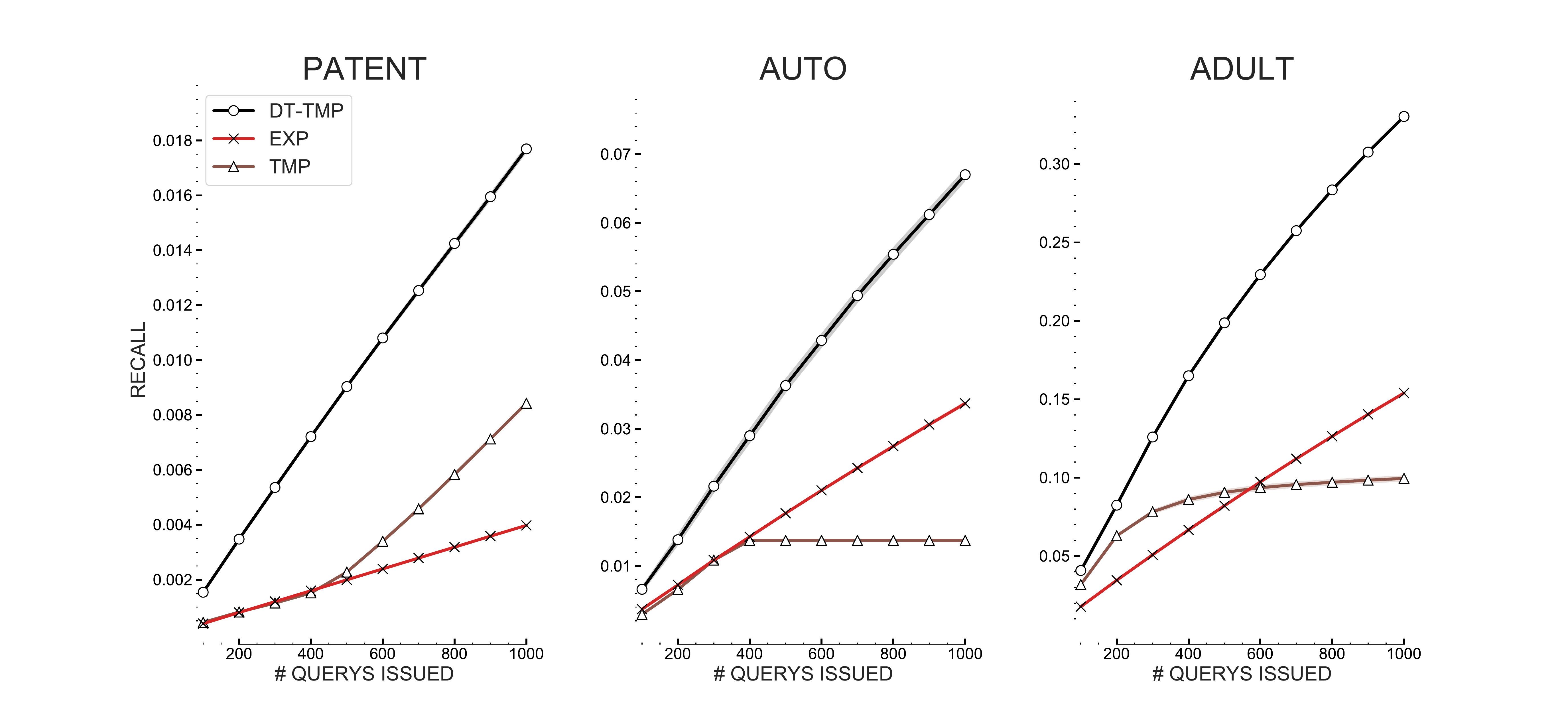}
 \caption{Comparing the recall value of non-combinatorial query based baseline samplers (\texttt{TMP}, \texttt{EXP}) and combinatorial sampler (\texttt{DT-TMP}) in real-world datasets. Combinatorial sampler outperforms all non-combinatorial sampler by 48.88\% (AUC measure) average overall datasets. }
 \label{fig:recall_independent_vs_combination}
\end{figure}


%

\Cref{fig:adult_heatmap} shows that when arms of two attributes are combined to generate a new query, it performs better than the two corresponding arms that are queried disjointly. The leftmost and the topmost sliders shows the quality of arms of individual attributes. The left matrix is the combinatorial queries of the two attributes: ``education'' and ``marital status''. The right matrix is the combination of two arms which are queried disjointly and their quality is measured via average quality of the two corresponding arms. Observe that the real-world arm combinations help us explore very high quality combinatorial arms (darker high quality regions defined by setting ``education'' to 9 or 10 and marital status set to 5). Notice that such types of correlation between attributes helps recover high yielding arms by the decision tree. However when the queryable attributes are independent, decision tree works identical to a naive MAB sampler.


\begin{figure}
 \centering
 \includegraphics[width=\linewidth,trim={1cm 0.3cm 0cm 0cm},clip]{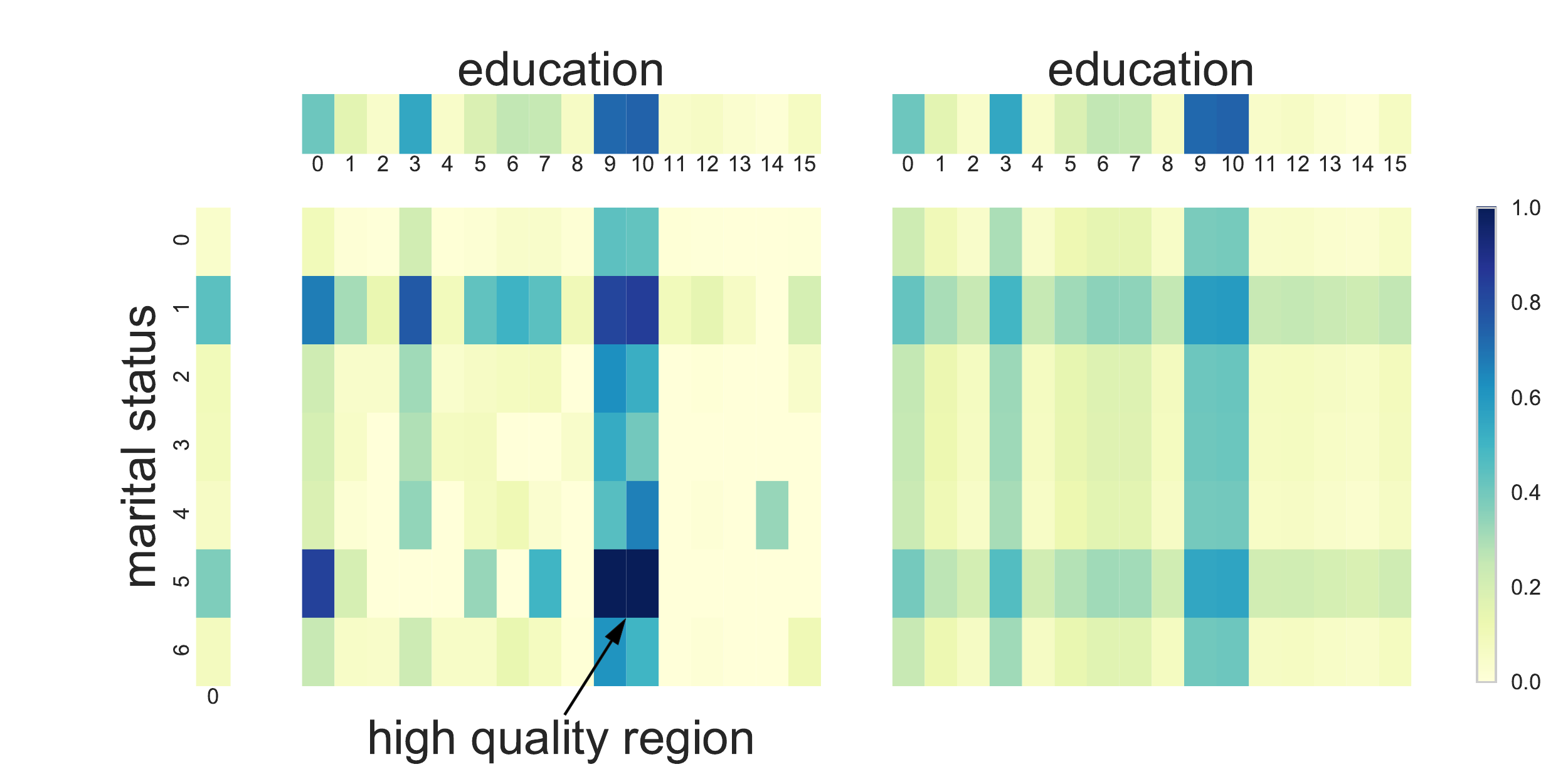}
 \caption{ Comparison between combinatorial queries and disjoint combination of queries shown via heat-map of two attributes, ``marital status'' and ``education'', in Adult dataset. Darker shade represents higher quality arms. }
 \label{fig:adult_heatmap}
\end{figure}

\subsection{Digging deeper: Factors affecting sampling}
\label{subsec:dig_deeper}
We now explore three prominent factors that impact hidden population sampling. This analysis will help users to be mindful of different factors that may affect their hidden population sampling.


\subsubsection{Effect of page size}

Page size is directly proportional to the performance of a sampler. Higher page size means larger number of samples obtained in every API call. More samples leads to higher recall values. ~\Cref{table:result_size_improvement} depicts the percentage improvement in recall value as a consequence of increasing the page size across various samplers. We observe that recall of \texttt{DT-TMP} is better than baseline samplers by a significant factor. Furthermore, we note \texttt{DT-TMP} has the second best improvement in recall value. \texttt{UNI} has the best improvement in recall value as the page size increases. This is due to the fact that \texttt{UNI} avoids the over-sampling issue as discussed later. However, it should be noted that \texttt{UNI} is typically not possible in practice since it is not supported by APIs of most real-world data sources such as Facebook, Twitter and LinkedIn.

For MAB samplers, we observe that increasing page size from 5 to 10 leads to \textit{diminishing returns}, which is used to refer to the phenomenon that the recall value increases by lesser than $\alpha \%$ as the page size increase by $\alpha \%$. The diminishing returns takes place due to two reasons. One, at high page size, generic queries are more likely to over-sample the same entities that are sampled by their specific queries. Two, we observe non-uniform skewed distribution of population over queryable attributes. Thus, non-uniform query sizes lead to under-sampling of entities. Under-sampling of entities refers to the scenario when the query size is less than page size, thus forcing the API to return less than page size results.

\begin{table}[h]
\center
\caption{Percentage improvement in recall value when page size increase from 5 to 10. UNI that samples uniformly over the database shows an expected improvement of nearly $100\%$. \texttt{DT-TMP} is the best performing sampler. However, we observe the effect of  diminishing results. At very high page size, all samplers behave similarly.}
\label{table:result_size_improvement}
\resizebox{\linewidth}{!}{%
\begin{tabular}{@{}c cc cc cc@{}}
 & \multicolumn{2}{c}{\textit{Patent}} & \multicolumn{2}{c}{\textit{Auto}}& \multicolumn{2}{c}{\textit{Adult}} \\ \toprule
Samplers& $R_{5}$& $\Delta R_{5\rightarrow 10}\%$& $R_{5}$& $\Delta R_{5\rightarrow 10}\%$& $R_{5}$& $\Delta R_{5\rightarrow 10}\%$\\ \toprule
\texttt{UNI} 	& 	1.39E-03 	& 	96.94 	& 	4.02E-04 	& 	93.39 	& 	5.43E-02 	& 	93.03 \\
\texttt{TMP} 	& 	4.96E-04 	& 	89.24 	& 	5.02E-04 	& 	58.20 	& 	1.92E-02 	& 	37.08 \\
\texttt{EXP} 	& 	7.50E-04 	& 	90.30 	& 	3.37E-04 	& 	49.85 	& 	1.69E-02 	& 	39.68 \\ \midrule
\texttt{RW} 	& 	6.13E-04 	& 	87.78 	& 	3.00E-04 	& 	22.95 	& 	1.04E-02 	& 	44.56 \\
\texttt{LS} 	& 	5.20E-04 	& 	138.12 	& 	1.91E-04 	& 	72.36 	& 	1.43E-02 	& 	69.58 \\
\texttt{CB} 	& 	2.17E-03 	& 	107.59 	& 	3.60E-04 	& 	147.54 	& 	7.89E-02 	& 	49.89 \\ \midrule
\texttt{DT-TMP} 	& 	5.34E-03 	& 	83.13 	& 	1.32E-03 	& 	60.47 	& 	1.22E-01 	& 	65.85 \\ \bottomrule
\end{tabular}
}
\end{table}

\subsubsection{Effect of number of queryable attributes}

\begin{figure}
\centering
\includegraphics[width=\linewidth,trim={3cm 0.3cm 5.5cm 1cm},clip]{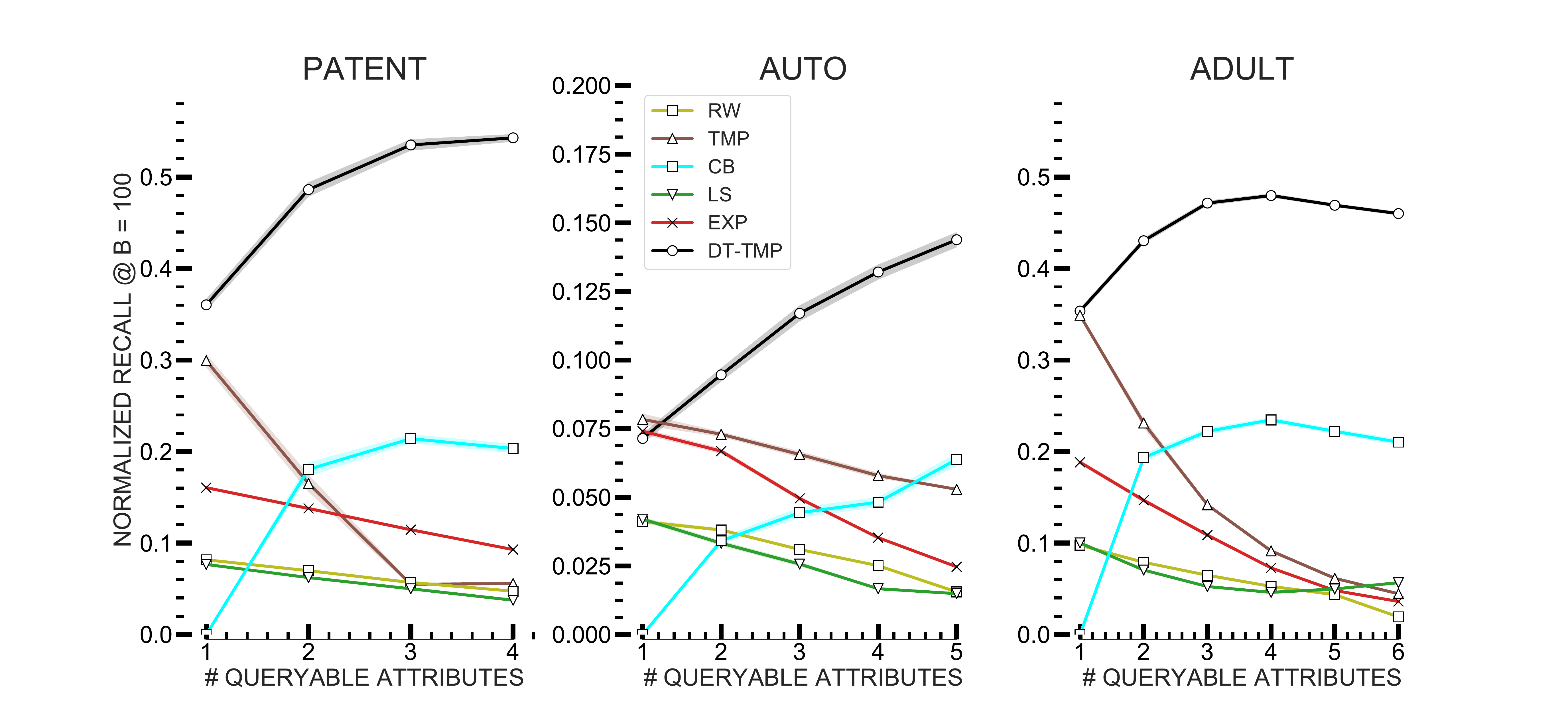}
\caption{Recall value of different samplers at budget of 100 API calls. As the number of queryable attributes increases, \texttt{DT-TMP}'s performance increases due to it's flexibility in exploring over large query space. Large number of attributes creates exponential more arms to explore for naive MAB samplers, thereby causing a drop in their performance. }
\label{fig:attrnums}
\end{figure}

We now explore the effect of number of queryable attributes on discover-ability of hidden target population. We try different subsets of queryable attributes described in \Cref{tab:dataset_stats} for the offline datasets to observe this effect. We average the results of attribute combinations (subsets) where equal number of queryable attributes are used. \Cref{fig:attrnums} shows the efficacy of \texttt{DT-TMP} over baseline samplers as the number of queryable attributes increases. Existing samplers suffer from high exploration space associated with large number of queryable attributes. \texttt{DT-TMP} applies the decision tree based search to preferentially explore high yielding queries in large query spaces. It therefore performs significantly better than baseline samplers. However when number of queryable attributes is one, it is observed in ``auto'' subplot of \Cref{fig:attrnums} that \texttt{DT-TMP} performs slightly worse than baseline samplers; for all other subplots the samplers are statistically indistinguishable at a confidence interval of 95\%. This happens because \texttt{DT-TMP} trades off the explore-exploit phase in typical MABs to greedily explore new attribute combinations. It therefore performs slightly poorly compared to baseline MAB samplers when the number of combinations is very few. Since UNI is independent of attributes, we do not include it in \Cref{fig:attrnums,fig:attrcardinality}.

%
%
%
%
%
%

\subsubsection{Effect of queryable attributes' cardinality}

Attribute cardinality controls the size of query space---higher attribute cardinality implies larger query space. We simulate the attribute cardinality in real-world datasets by modifying the attribute cardinalities in \Cref{tab:dataset_stats} to fixed number say $c$. For each attribute, we preserve the top yielding $c-1$ attribute-values and merge the low yielding remaining attribute-values into a new attribute-value. The merging preserves the discover-ability of hidden entities in the dataset over each attribute. We observe the effect of varying attribute cardinality in \Cref{fig:attrcardinality}. \texttt{DT-TMP} owing to it's search tree strategy is very effective at exploring and exploiting rewards distributed over large query space which is induced by high cardinality attributes.

\begin{figure}
\centering
\includegraphics[width=\linewidth,trim={3cm 0.3cm 5.5cm 1cm},clip]{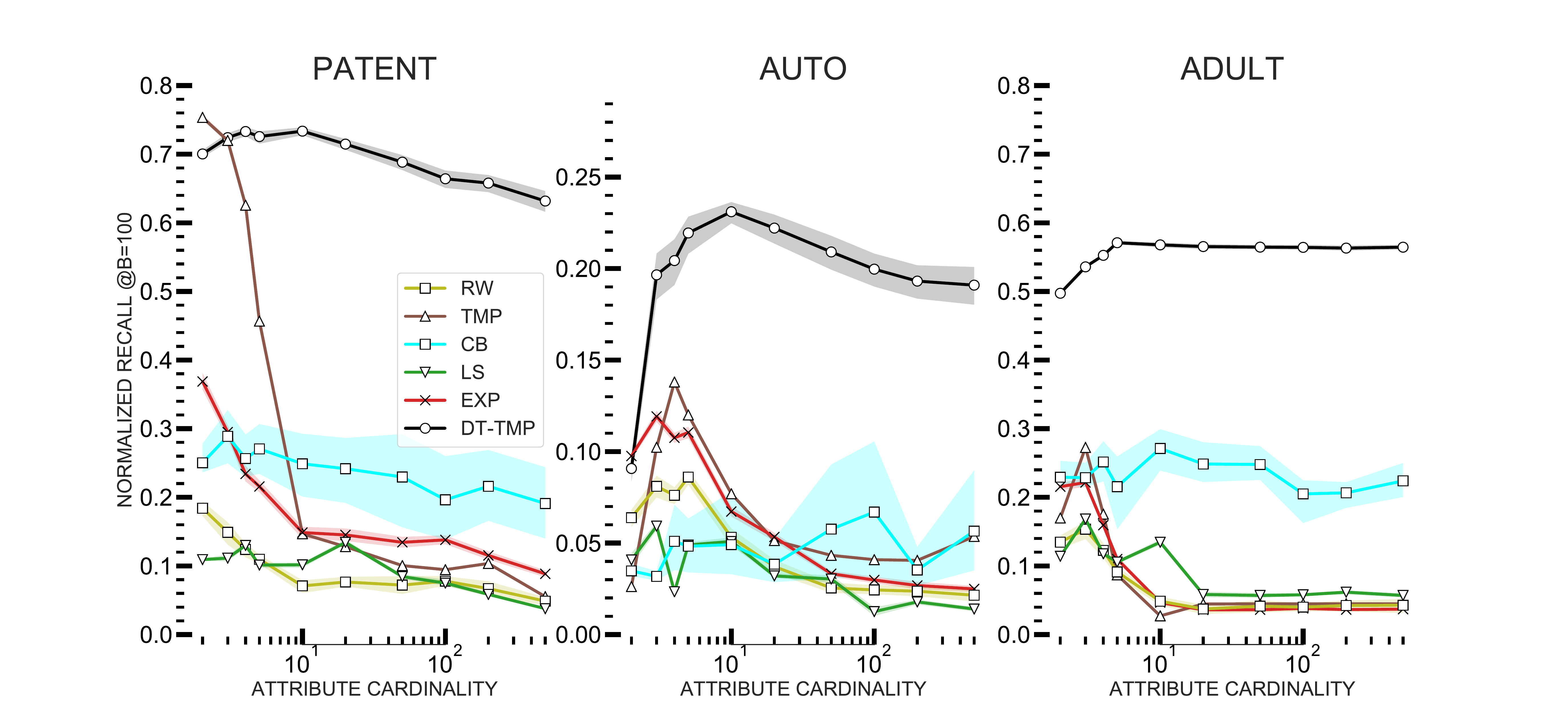}
\caption{Recall value for different sampling strategies at budget of 100 API calls across variable attribute cardinality. Combinatorial MAB (\texttt{DT-TMP}) does increasingly better over large attribute space created by high attribute cardinalities. }
\label{fig:attrcardinality}
\end{figure}

\subsubsection{Effect of correlation between queryable attributes and the hidden property}

Correlation between the queryable attributes and the hidden property is one of the most important metrics. Highly correlated queryable attributes help form queries that yield high number of hidden population entities and vice versa. We control the vary between queryable attributes and hidden attributes by randomly shuffling a fractional (shuffle ratio) subset of the dataset. \Cref{fig:correlation} shows that the performance of all samplers fall as the shuffle ratio increases. This happens because the at high shuffle rate the hidden population entities are uniformly distributed across different queries of the population, thus leading to poor recall values. The relatively higher performance of \texttt{DT-TMP} over the baseline samplers at even higher shuffle ratio is on account of the fact that \texttt{DT-TMP} avoids re-sampling of entities. 

\begin{figure}
\centering
\includegraphics[width=\linewidth,trim={3cm 0.3cm 5.5cm 1cm},clip]{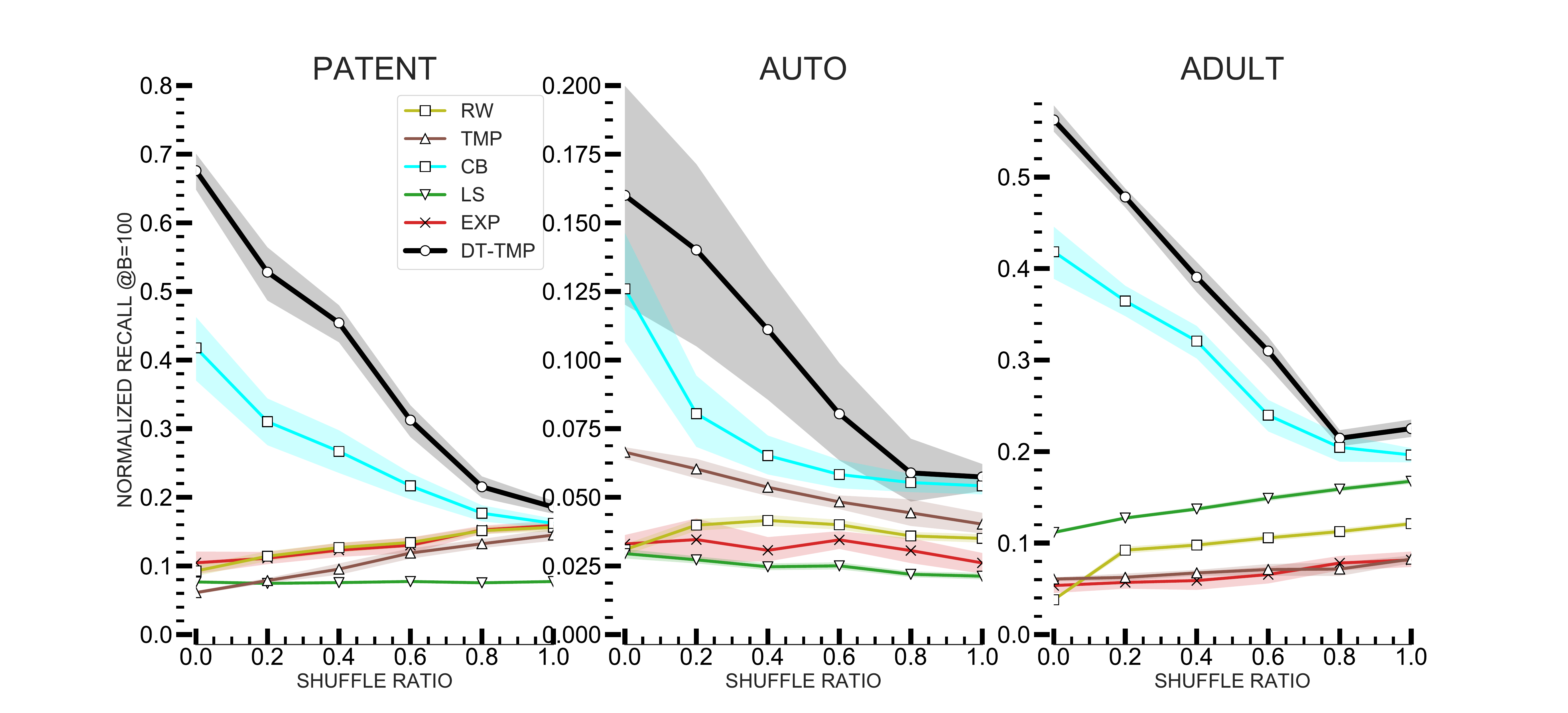}
\caption{Recall value for different sampling strategies at budget of 100 API calls across different shuffle rates. The sampling performance falls with increasing shuffle rate. The performance depicts that \texttt{DT-TMP} is better at learning the correlation between hidden property and queryable attributes.  }
\label{fig:correlation}
\end{figure}

\subsection{Limitations}
\label{subsec:limitation}
Now, we discuss four limitations of this work. First, our algorithms are agnostic to the ranking function and therefore ordering of the results. However, when the ranking function is correlated with the hidden target attribute, agnostic samplers may not perform well. We note that non-stationary reinforcement learners should be used to handle the effect of non-stationary reward induced by unknown ranking functions. Second, our algorithms rely upon a classifier for identifying target population when the hidden target attribute is implicit. Our future work is to discern the effect of classifier(s) in hidden population sampling. Third, our MABs handle continuous and infinite valued attributes by discretizing the attribute to ensure that there is finite arms of MAB. Third, much of our work assumes content to be static. In future, it will therefore be useful to modify the \texttt{DT-TMP} sampling strategy to handle streaming datasets.



\section{Existing work}
\label{sec:related work}

In this section, we discuss various methods that are closely related to the problem of sampling hidden population from OSNs.

Focused crawling is a well-studied problem wherein a crawler tries to maximize the coverage of a given target topic such as ``semiconductor related web-pages'' by traversing web-links. Chakraborti et al.~\cite{chakrabarti1999focused} proposed focused crawler that iteratively explores web-links that are more likely to fetch topic web-pages. Similar works on crawling are focused around exploiting the information of web-page such as it's content link structure, URL and metadata~\cite{menczer2000adaptive,hersovici1998shark} to efficiently crawl target pages. In contrast to the focused crawling that uses graph based interface, our samplers use a form based interface to iteratively query for a hidden population.  


Hidden web crawling is an area of research that tries to gather the entire population or database contents by efficiently querying or crawling via database's interface. Raghavan et al \cite{raghavan2000crawling} first proposed a task specific hidden web crawler called as Hidden Web Exposer that crawled the hidden web forms by maintaining Label Value Set table used for filling out the forms. Wu et al\cite{wu2006query} proposed attribute value graph traversal based heuristics to crawl the hidden databases. Several other works such as \cite{zheng2013learning} tries to sample hidden database entirely in fewest number of queries. Sheng et al\cite{sheng2012optimal} showed optimal algorithms for crawling entire hidden web database from form based interfaces. However, in contrast to the previous studies in hidden web crawling that aims at discovering the \textit{entire} hidden database, our sampler is focused towards a \textit{target} hidden population. Furthermore, most of the previous works have either been been limited to textual query interfaces or require a prior knowledge or seed set of the well defined topic~\cite{liakos2016focused}. Another limitation of existing form based interfaces is that they consider all results pertaining to a query to be obtained in a single API query \cite{nazi2015querying,sheng2012optimal} which is not a realistic assumption for most web interfaces such as GitHub and Twitter.

Query reformulation is another line of research that works at identifying better queries for higher recall in text retrieval systems. Existing retrieval systems in literature are predominantly designed to search for new queries that yield higher reward \cite{li2016multiple}. Query reformulation systems \cite{rieh2006analysis} typically rewrites a query to find a new query that maximizes the number of relevant document returned. In contrast to the query retrieval systems that retrieves documents usually expressed by rich textual information, our work focuses is entity retrieval problem where discovered entities provide very limited information in form of few attributes and the query interface is limited to queryable attributes of the entities. To the best of the knowledge, this is the first work that aims at retrieving hidden target entities in OSNs by querying their faceted APIs using attribute combinations.


\section{Conclusion and Future Work}
\label{sec:conclusion}

This paper proposed a novel algorithm for sampling hidden target populations from online social networks. However, sampling individuals from hidden populations is hard due to API rate limits, limited access methods (or limited number of queryable attributes) and combinatorial query space to search from. To address these challenges, we modeled the problem as a Multi-Armed Bandit problem. We proposed a state-aware \texttt{DT-TMP} that exploited structure in combinatorial query space to discover high yielding queries. Our proposed sampler is better than the competing samplers by factor of 0.9-1.5$\times$ on offline real-world datasets where query size is returned by the API. Our samplers perform by margin of 54\% on Twitter hidden population tasks and 49\% on RateMD experiments. Exploring the effect of classifiers in discovering hidden populations is the focus of our future work.

\printbibliography

\end{document}